\title{\LARGE \bf Distributed Consistent Data Association
}
\author{Spyridon Leonardos$^{1}$,  Xiaowei Zhou$^{2}$ and Kostas Daniilidis$^{3}$% <-this % stops a space
%\thanks{*This work was not supported by any organization}% <-this % stops a space
\thanks{$^{1,2,3}$The authors are with the  Department of Computer and Information Science, University of Pennsylvania
        Philadelphia, PA 19104, USA
        {\tt\small \{spyridon,xiaowz,kostas\}@seas.upenn.edu}}
        \thanks{Support by the grants ARL MAST-CTA W911NF-08-2-0004 and ARL RCTA W911NF-10-2-0016 is gratefully acknowledged.}
}
\newcommand*{\QEDA}{\hfill\ensuremath{\blacksquare}}%
\newcommand{\todo}[1]{\textcolor{red}{TODO: #1}}
\newcommand{\notion}[1]{\textit{#1}}
\newtheorem{thm}{Theorem}[section]
\newtheorem{definition}[thm]{Definition}
\newtheorem{lemma}[thm]{Lemma}
\DeclareMathOperator*{\rank}{rank}
\DeclareMathOperator*{\diag}{diag}
\DeclareMathOperator*{\trace}{tr}
\newcommand{\mean}[1]{\overline{#1}}
\newcommand{\cD}{\mathcal{D}}
\newcommand{\cNi}{\mathcal{N}_i}
\newcommand{\mapdef}[3]{#1 : #2 \rightarrow #3}
\newcommand{\reals}{\mathbb{R}}
\newcommand{\GL}{\mathbf{GL}}
\newcommand{\OO}[1]{\mathbf{O}({#1})}
\newcommand{\symgroup}[1]{\mathfrak{S}_{#1}}
\newcommand{\jiE}{{(j,i) \in \mathcal{E}}}
\newcommand{\ijEu}{{\{i,j\} \in \mathcal{E}}}
\newcommand{\card}[1]{\left| #1 \right|} % cardinality of a set
\newcommand{\graph}{\mathcal{G}}
\newcommand{\vset}{\mathcal{V}}
\newcommand{\eset}{\mathcal{E}}
\newcommand{\degmaxG}{d_{\textrm{max}}(\graph)}
\newcommand{\degmax}{d_{\textrm{max}}}
\newcommand{\degavg}{d_{\textrm{avg}}}
\newcommand{\Pii}{\Pi_i}
\newcommand{\tpi}{\widetilde{\pi}}
\newcommand{\Pij}{\Pi_j}
\newcommand{\tPi}{\widetilde{\Pi}}
\newcommand{\tPiij}{\widetilde{\Pi}_{ij}}
\newcommand{\tPiik}{\widetilde{\Pi}_{ik}}
\newcommand{\tPijk}{\widetilde{\Pi}_{jk}}
\newcommand{\ie}{i.e.}
\newcommand{\norm}[1]{\lVert#1\rVert}
\newcommand{\metric}[2]{\langle #1, #2\rangle}
\algnewcommand\algorithmicinput{\textbf{INPUT:}}
\definecolor{myblue}{RGB}{80,80,160}
\definecolor{mygreen}{RGB}{80,160,80}
\definecolor{myred}{RGB}{160,80,80}
\begin{document}

\maketitle
\thispagestyle{empty}
\pagestyle{empty}

%%%%%%%%%%%%%%%%%%%%%%%%%%%%%%%%%%%%%%%%%%%%%%%%%%%%%%%%%%%%%%%%%%%%%%%%%%%%%%%%
\begin{abstract}

Data association is one of the fundamental problems in multi-sensor systems.
Most current techniques rely on pairwise data associations
which can be spurious even after the employment of outlier rejection schemes.
Considering multiple pairwise associations at once  significantly increases accuracy and leads to consistency.
In this work, we propose two fully decentralized methods for consistent global data association from pairwise data associations.
The first method is a consensus algorithm on the set of doubly stochastic matrices.  The second method is a decentralization of the  spectral method proposed  by Pachauri et al. \cite{pachauri2013solving}.
We demonstrate the effectiveness of both methods using theoretical analysis and experimental evaluation.
%Finally, we prove that even deciding whether a consistent global data association exists is NP-complete.
\end{abstract}

%%%%%%%%%%%%%%%%%%%%%%%%%%%%%%%%%%%%%%%%%%%%%%%%%%%%%%%%%%%%%%%%%%%%%%%%%%%%%%%%
\section{INTRODUCTION} % and related work

Multi-sensor data association has been a long standing problem in robotics and computer vision. It refers to the problem of establishing correspondences between feature points, regions, or objects observed by different sensors and serves as the basis for many high-level tasks such as localization, mapping and planning. Most of the efforts in previous works have been dedicated to improving the data association by designing new feature detectors, descriptors, and outlier removal algorithms in a pairwise setting. However, the problem setting in practice is often multi-way if the data are collected by a sensor network or a multi-robot system, and how to establish consistent data association for multiple sensors and leverage the global reasoning to improve the association has drawn increasing attention in recent years.

A necessary condition for good data association among multiple sensors is the cycle consistency meaning that the composition of associations along a cycle of sensors should be identity. The cycle consistency is often violated in practice due to outliers and how to use such a constraint to remove the false associations has been studied in robotics, computer vision and graphics. The related work will be discussed in Section \ref{sec:related}. While promising empirical and theoretical results have been reported in previous works, most of them addressed the problem in a centralized manner and assumed that the observations and states of all sensors could be accessed at the same time. This assumption is impractical in a distributed system where the data is processed on local sensors with limited computational and communication resources. In this paper, we aim to develop distributed algorithms that can efficiently operate on each sensor,  requiring only local information and communication with its neighbors, and finally producing globally consistent data association.

 \begin{figure}
 \label{firstfig}
 \centering
\includegraphics[trim={5cm 4cm 5cm 2cm},clip,scale=0.3]{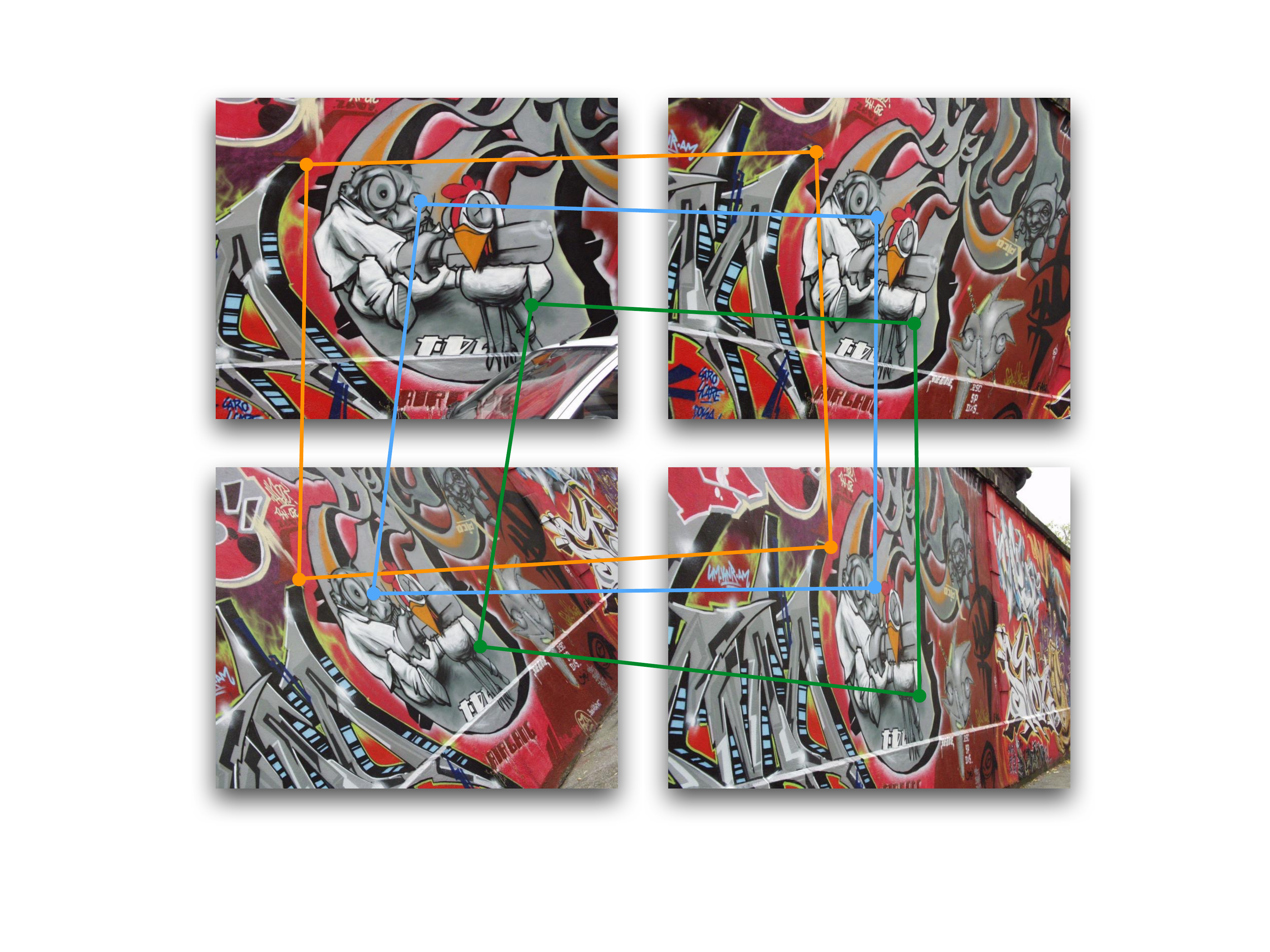}
%  trim={<left> <lower> <right> <upper>}
\caption{Illustration of consistent data association.}
\end{figure}

Our contributions are the following. 
1) We propose a novel consensus-like algorithm for distributed data association.
2) We propose a decentralized version of the centralized state-of-art spectral method \cite{pachauri2013solving}.
3) We show that both methods provably converge,  do not depend on initialization, are parameter-free and guarantee global consistency.
%2) We shed light on the combinatorial nature of consistent data association by proving that even deciding whether a consistent data association exists is NP-complete.
4) We demonstrate the validity of proposed methods through both theoretical analysis and  experimentation with  synthetic and real  data.

The rest of this paper is organized as follows. Section \ref{sec:related} includes a brief review of prior works.
Section \ref{sec:preliminary} introduces the reader to notation and preliminaries  essential to understanding the main results of this paper. 
A rigorous problem formulation is the topic of Section \ref{sec:problemformul}.
The two proposed methods are presented in Sections \ref{sec:consensus} and \ref{sec:spectral} respectively.
%In Section \ref{sec:npcomplete}, we prove that a particular version of data association is NP-complete.
Finally, experimental evaluation is presented in Section \ref{sec:experiments}.

\section{RELATED WORK}\label{sec:related}

The cycle consistency has been explored in a variety of computer vision applications. For example, Zach et al. \cite{zach2010disambiguating} studied the problem of multi-view reconstruction and proposed to identify unreliable geometric relations (e.g. relative poses) between views by measuring the consistencies along a number of cycles. Nguyen et al. \cite{nguyen2011optimization} addressed the problem of finding point-wise maps among a collection of shapes and proposed to use the cycle consistency to identify the correctness of maps and replace the incorrect maps with the compositions of correct ones. Yan et al.  \cite{yan2015consistency,yan2016multi} considered the multiple graph matching problem and imposed the cycle consistency by penalizing the inconsistency during the optimization. Zhou et al. \cite{zhou2015flowweb} developed a discrete searching algorithm that maximizes the cycle consistency to improve the dense optical flow in an image collection. Instead of iteratively optimizing the pairwise matches, \cite{kim2012exploring,huang2012optimization,pachauri2013solving} showed that finding globally consistent associations from noisy pairwise associations can be formulated as a quadratic integer program and relaxed into a generalized Rayleigh problem and solved by the eigenvalue decomposition. More recently, Huang and Guibas \cite{huang2013consistent} and Chen et al. \cite{chen2014near} theoretically proved that the consistent associations could be exactly recovered from noisy measurements under certain conditions by convex relaxation and solved the problem by semidefinite programming. Zhou et al. \cite{zhou2015multi} proposed to solve the problem by rank minimization and developed a more efficient algorithm. All of these works addressed problem in a centralized setting, i.e., all pairwise measurements are available and optimized jointly.

In the robotics community, numerous approaches for associating elements of two sets  have been proposed and extensively used in single robot Simultaneous Localization And Mapping (SLAM) settings. These include but are not limited to Nearest Neighbor (NN) and Maximum Likelihood (ML) approaches \cite{kaess2009covariance,zhou2006multi},  Iterative Closest Point (ICP) \cite{besl1992method},  RANSAC  \cite{fischler1981random}, Joint Compatibility Branch and Bound (JCBB) \cite{neira2001data}, Maximum Common Subgraph
(MCS) \cite{bailey2001localisation}, Random Finite Sets (RFS) \cite{vo2005sequential}.
 More recently,  multi-robot localization and mapping algorithms have been developed. In almost all of them, the data association is addressed either in a pairwise fashion by directly generalizing data association techniques from the single robot case  \cite{cunningham2012fully,williams2002towards,fox2006distributed} or  by considering all associations jointly in a centralized manner \cite{indelman2014multi}.
Closer to our approach is the work by Aragues et al. \cite{aragues2011consistent} which is a decentralized method for finding inconsistencies  based on cycle detection. However, the proposed inconsistency resolution algorithm comes without any optimality or guarantees.
 
%Cunningham et al.  \cite{cunningham2012fully} proposed a fully decentralized method for multi-robot SLAM in which two maps are merged  using RANSAC and Delaunay triangulations. 
%In \cite{indelman2014multi} a centralized EM approach was proposed.  introduce a latent binary variable for each possible multi-robot data association,  local convergence due to the nature of EM but they consider all associations at once.
%In \cite{williams2002towards,fox2006distributed} distributed approach but they do map merging in a pairwise fashion.
%All the above methods either consider all associations jointly in a centralized setting or  consider one pairwise associations at once.

%%%%%%%%%%%%%%%%%%%%%%%%%%%%%%%%%%%%%%%%%%%%%%%%%%%%%%%%%%%%%%%%%%%%%%%%%%%%%%%%

%%%%%%%%%%%%%%%%%%%%%%%%%%%%%%%%%%%%%%%%%%%%%%%%%%%%%%%%%%%%%%%%%%%%%%%%%%%%%%%%

\section{PRELIMINARIES, NOTATION AND DEFINITIONS}
\label{sec:preliminary}

\subsection{Graph Theory}

In this subsection, we review some elementary facts from graph theory.
For an in depth analysis, we refer the reader to standard texts \cite{ godsil2013algebraic,mesbahi2010graph}.
An \notion{undirected graph} or simply a \notion{graph} is denoted by the pair $\graph= (\vset,\eset)$,
where $\vset =\{1,2,\ldots,n\}$ is the set of vertices and $\eset \subseteq [\vset]^2$ is the set of edges,
where $[\vset]^2$ denotes the set of unordered pairs of elements of $\vset$.
The neighborhood $\cNi$ of the vertex $i$ is  the subset of $\vset$ defined by $\cNi = \{ j \in \vset \ | \ \ijEu \}$.
A \notion{path} is a sequence ${i_0},{i_1},\ldots,{i_m}$ of distinct vertices such that $\{{i_{k-1}},{i_{k}}\} \in \eset$ for all $k =1,\ldots,m$.
A graph is \notion{connected} if there is a path between any two vertices.
Given a graph $\graph= (\vset,\eset)$, its \notion{adjacency matrix} is  defined by
\begin{equation}
[A(\graph)]_{ij} =
\left\{\begin{array}{ll}
 1,  &\text{if } \ijEu \\
 0,  &\text{otherwise }
 \end{array}
 \right.
\end{equation}
The \notion{degree matrix} $\Delta(\graph)$  is the diagonal matrix   such that $[\Delta(\graph)]_{ii} = \card{\cNi}$,
where $\card{\cdot}$ denotes the cardinality of a set.

A \notion{directed graph} or \notion{digraph} is denoted by the pair $\graph= (\vset,\eset)$,  where $\eset \subseteq \vset \times \vset$.
A \notion{weighted digraph} $\graph = (\vset,\eset,w)$ is a graph along with a function $\mapdef{w}{\eset}{\reals_+}$. The adjacency matrix of a weighted digraph is defined by
\begin{equation}
[A(\graph)]_{ij} =
\left\{\begin{array}{ll}
 w(j,i) ,  &\text{if } \jiE \\
 0,  &\text{otherwise }
 \end{array}
 \right.
\end{equation}
Intuitively, if $[A(\graph)]_{ij}>0$ there is information flow from vertex $j$ to vertex $i$.  The neighborhood $\cNi$ of the vertex $i$ is  the subset of $\vset$ defined by $\cNi = \{ j \in \vset \ | \ \jiE \}$.
The \notion{degree matrix} $\Delta(\graph)$
\begin{equation}
[\Delta(\graph)]_{ii} = d_{\textrm{in}}(i) \doteq \sum_{ j \in \cNi } [A(\graph)]_{ij}
\end{equation}
A \notion{directed path} is a sequence ${i_0},{i_1},\ldots,{i_m}$ of distinct vertices such that $({i_{k-1}},{i_{k}}) \in \eset$ for all $k =1,\ldots,m$.
A digraph is \notion{strongly connected} if there is a directed path between any two vertices.
A digraph is a \notion{rooted out-branching tree} if it has a vertex to which all other vertices are path connected and does not contain any cycles.
A digraph is \notion{balanced} if the in-degree $d_{in}(i)$ and the out-degree   $d_{\textrm{out}}(i) \doteq \sum_{ (i,j) \in \eset } [A(\graph)]_{ji}  $ are equal for all $i \in \vset$.
 From this point and on, all graphs under consideration are assumed to be directed unless stated otherwise. 
 The \notion{graph Laplacian} $L(\graph)$ is defined as
\begin{equation}
L(\graph) = \Delta(\graph) - A(\graph)
\end{equation}
 By construction, $L(\graph) \mathbf{1} =0$. A digraph on $n$ vertices contains a rooted-out branching if and only the rank of its Laplacian is $n-1$. % see prop 3.8 page 51 of Meshbahi
%Moreover,  it is positive semidefinite if $A$ is symmetric.
%A graph $\graph$ is connected if and only if $\lambda_2(L(\graph) )>0$.
%Using Gersgorin discs theorem \cite{horn2012matrix}, we can see that all the eigenvalues of the adjacency matrix lie in the inteval $[-\degmax,\degmax]$.
A matrix closely related to the graph Laplacian is defined by
\begin{equation}
\label{eq:definiingF}
F(\graph) \doteq (I+\Delta(\graph))^{-1}(I+ A(\graph) )
\end{equation}

\subsection{Stochastic Matrices and Permutations}

Stochastic matrices and their properties have been well studied in the area of distributed dynamical systems \cite{jadbabaie2003coordination,olfati2007consensus,bertsekas1989parallel} and
in probability theory in the  context of computing  invariant distributions of Markov chains \cite{asmussen2008applied,bertsekas1989parallel}.
A nonnegative square matrix is   \notion{stochastic} if all its row sums are equal to 1.
The spectral radius $\rho(A)$ of a stochastic matrix $A$ is equal to 1 and it is an eigenvalue of $A$ since $A \mathbf{1}=\mathbf{1}$.
The existence and the form of the limit $\lim_{k \rightarrow \infty}A^k$ depends on the multiplicity of  $1$ as an eigenvalue.
A nonegative square matrix $A$ is \notion{irreducible} if its associated graph \cite{horn2012matrix} is strongly connected.
An irreducible stochastic matrix is  \notion{primitive} if it  has only one eigenvalue of maximum modulus.

A nonnegative square matrix  is \notion{doubly stochastic} matrix if both its row sums and column sums are equal to 1.
A doubly stochastic matrix is a \notion{permutation matrix} if its elements are either 0 or 1.
The sets of stochastic matrices and doubly stochastic matrices are compact convex sets.
This is a rather useful property since it implies closure under convex combinations.
For more details, regarding the theory of stochastic matrices, we refer the reader to \cite{horn2012matrix}.

%A matrix is irreducible if it is not similar via a permutation to a block upper triangular matrix.
%If a square matrix $A$ is nonnegative, then $A$ is \textit{primitive} if and only if $A^m >0$ for some $m \geq 1$.
%A stochastic matrix $A$ such that $\lim_{k\rightarrow \infty} A^k$ has rank 1 is called \textit{ergodic}.
%Primitive stochastic matrices are ergodic.

%Some facts from \cite{horn2012matrix}.
%Given a nonnegative $n \times n$ matrix $A$, the \notion{associated digraph $\Gamma(A)$} is defined as follows: it has $n$ vertices and there is a (directed) edge from vertex $i$ to vertex $j$ if and only if $[A]_{ij} >0$.  There is a directed pathfrom $i$ to $j$ of length $m$ if and only if $[A^m]_{ij} >0$.
%Moreover, $A$ has the property SC is the associated directed graph $\Gamma(A)$ is strongly connected. Given  a nonnegative $A \in \real{n\ \times n }$ the following are equivalent
%\begin{itemize}
% \item $A$ is irreducible
% \item $(I+A)^{n-1} > 0$
% \item $\Gamma(A)$ is strongly connected
%\item $A$ has the property SC
%\end{itemize}

Let $[n] \doteq \{1,2,\ldots,n \}$ for some positive integer $n$.
A mapping  $\mapdef{\pi}{[n]}{[n]}$ is a \notion{permutation} of $[n]$ if it is bijective.
The set of all permutations of $[n]$ forms a group under composition, termed the \notion{symmetric group} $\symgroup{n}$.
A permutation $\pi \in \symgroup{n}$ is represented
\footnote{A representation of a group $G$ on $\reals^n$ is a map $\mapdef{\rho}{G}{\GL(n)}$ satisfying $\rho(g_1 g_2) = \rho(g_1) \rho(g_2)$.}
by a $n \times n$ permutation matrix $\Pi$ defined by
\begin{equation}
[\Pi]_{ij} =
 \left\{\begin{array}{ll}
 1,  &\text{if } \pi(j)=i \\
 0,  &\text{otherwise }
 \end{array}
 \right.
\end{equation}
or equivalently $\Pi e_j = e_{\pi(j)}$,  where $e_j$ is the $j$th canonical basis vector.
%The set of $n \times n$ permutation matrices is a group under matrix multiplication.
%The inverse of a permutation matrix is given by its transpose, \ie $\Pi^{-1} = \Pi^T$.
The simplest choice for a distance  on $\symgroup{n}$ is given by
\begin{equation}
\label{eq:distanceperm}
d(\pi_1,\pi_2)
= d(e,\pi_1^{-1}\pi_2)
 \doteq  n - \metric{\Pi_1}{\Pi_2}
\end{equation}
where  $\metric{A}{B}\doteq \trace(A^TB)$, $e$ is the identity map and $\Pi_1,\Pi_2$ are the matrix representations of the permutations $\pi_1$ and $\pi_2$ respectively. 
The  distance function defined above is simply the number of labels assigned differently by permutations  $\pi_1$ and $\pi_2$.
With a slight abuse of notation, we write $\Pi \in \symgroup{n}$ meaning that $\Pi$ is an $n\times n$ permutation matrix.

\subsection{Consensus Algorithms}

%%%%%%%%%%%%%%%%%%%%%%%%%%%%%%%%%%%%%%%%%%%%%%%%%%%%%%%%%%%%%%%%%%%%%%%%%%

Consensus algorithms have been extensively studied in the control community \cite{bertsekas1989parallel,jadbabaie2003coordination,olfati2007consensus}. In its simplest form, a consensus algorithm is a decentralized protocol in which the agents, modeled as vertices of a graph,  try to reach agreement by communicating only with their neighbors. More formally, let $x_i(t) \in \reals$ denote the state of agent $i$ at time time $t$. The, the simplest consensus protocol is given by
\begin{equation}
x_i(t+1) = \sum_{j \in \cNi \cup {i}} a_{ij} x_j(t)
\end{equation}
where $a_{ij} \geq 0$ and $ \sum_{j} a_{ij}=1$.
A popular consensus protocol \cite{olfati2007consensus}, which converges to the average of the initial values provided that $\graph $ is balanced, is described by
\begin{equation}
\label{eq:standarconsensusprotocol}
\mathbf{x}(t+1) = (I - \epsilon L(\graph)) \mathbf{x}(t)
\end{equation}
where $\mathbf{x}=[x_1 , \ldots , x_n]^T$ and $0 < \epsilon < 1/\max_i [\Delta(\graph)]_{ii}$.

\medskip

%%%%%%%%%%%%%%%%%%%%%%%%%%%%%%%%%%%%%%%%%%%%%%%%%%%%%%%%%%%%%%%%%%%%%%%%%%
\section{PROBLEM FORMALIZATION}
\label{sec:problemformul}

In this section, we formalize the problem of consistent data association.
We assume there are $n$ sensors observing $m$ targets. 
For instance, assume we have a set of $n$ cameras observing a  scene in the world described by a set of $m$ feature points. 
Sensors communicate only with a subset of all sensors.
Communication constraints  between sensors are encoded by the \textit{sensor graph}. 
The sensor graph is the digraph $G=(\vset,\eset)$ where $\vset=\{1,2,\ldots,n\}$ and $(i,j) \in \eset$ if there is information flow from sensor $i$ to $j$.
The pairwise association $\pi_{ij} \in \symgroup{m}$ is defined as follows:
we have that $\pi_{ij}(l)=k$ if the $l$th target in $j$th sensor corresponds to the $k$th target in $i$th sensor.
Observe that the pairwise associations $\pi_{ij}$ can be written as  $ \pi_{ij} = \pi_i \circ \pi_j^{-1}$, 
where $\pi_i \in \symgroup{m}$ is the mapping from the labels of sensor $i$ to some global labels, termed the ``universe of features'' in some works \cite{chen2014near,zhou2015multi}.  
This is analogous to the absolute position in Euclidean spaces in localization problems.
We denote by $\tpi_{ij} \in \symgroup{m}$ the, possibly erroneous, estimated pairwise association between sensor $i$ and $j$
and by $\tPi_{ij}$ the corresponding matrix representation. Moreover, let $\Pi_i \doteq \rho(\pi_i)$.

%Let $\Pi_{ij} \doteq \rho(\pi_{ij})$ be the corresponding matrix representation of $\pi_{ij}$.
%\begin{equation} \Piij e_l = e_{\pi_{ij}(l)}  \end{equation}
%if label $k$ in $i$th sensor is associated with label $l$ in the $j$th sensor and $0$ otherwise. 
% \[  \Piij e_l = \Pii \Piij^{-1} e_l = \Pii e_{\pi^{-1}_j(l)} = e_{\pi_i (\pi^{-1}_j(l))}  = e_{\pi_{ij}(l)}   \]

Related to the sensor graph is the \textit{data association graph} $\cD = (\vset_{\cD},{\eset}_{\cD},w_\cD)$, where $\vset_{\cD} = \vset \times \{ 1,2,\ldots,m\}$. 
There is an edge from $(i,k)$ to $(j,l)$ if and only if $(i,j) \in \eset$ and $[\tPiij]_{kl} >0$. 
The corresponding edge weight is simply equal to $[\tPiij]_{kl}$.

%\remark{In some cases,  $\tPiij$'s may be doubly stochastic matrices instead of permutation matrices. The}

First, we need a precise definition of \notion{consistency}.
\begin{definition}[Consistency]
A set of pairwise associations $\{ \tpi_{ij} \}_{(i,j)\in \eset}$ is \textit{consistent} if
\begin{equation}
\label{eq:consistency}
\tpi_{ij} \circ \tpi_{jk} = \tpi_{ik}
\end{equation}
for all valid indices $i,j,k$. A set of labels $\{\pi_i\}_{i=1}^n$ is \notion{consistent} with respect to some consistent pairwise associations  $\{ \tpi_{ij} \}_{(i,j)\in \eset}$, if
\begin{equation}
 \tpi_{ij} = \pi_i \circ \pi_j^{-1} , \qquad \forall (i,j) \in \eset
\end{equation}

%Equivalently in matrix notation we have
%\begin{equation}
%\label{eq:consistencymatrix}
%\Pi_{ij}  \Pi_{jk} = \Pi_{ik}
%\end{equation}
%for all $i,j,k \in \{1,2,\ldots,n\}$.
\end{definition}

\medskip

Based on the definition of consistency, the  problem of consistent data association is naturally defined as follows. 
\begin{definition}[\bf{Consistent data association}]
 Given  pairwise associations $\{\tpi_{ij}\}_{(i,j)\in \eset}$,  find labels $\pi_1,\ldots,\pi_n \in \symgroup{m}$, such that
 \begin{equation}
 \label{eq:problemdef}
\tpi_{ij} = \pi_{i} \circ \pi_j^{-1} , \qquad \forall \ (i,j) \in \eset
 \end{equation}
\end{definition}

\medskip
\remark{Under the presence of noise, it might not be possible to find labels $\{\pi_i \}_{i=1}^n$ satisfying \eqref{eq:problemdef} exactly.
Therefore, in practice we are looking for labels $\{\pi_i \}_{i=1}^n$ satisfying \eqref{eq:problemdef} as much as possible according to some criterion.
}

\begin{comment}
\medskip
Next, we have the consistency condition \eqref{eq:consistency}  in terms of the representations of the pairwise associations.
\begin{lemma}[Rank constraint for consistency \cite{huang2013consistent}] 
Given pairwise associations $\{ \tpi_{ij} \}_{(i,j)\in \eset}$, define the block matrix $\mathbf{P}$ by  $[\mathbf{P}]_{ij} = \tPiij\doteq \rho(\tpi_{ij})$.
The set of pairwise associations $\{ \tpi_{ij} \}_{(i,j)\in \eset}$ is \textit{consistent} if and only if
\begin{equation}
\label{eq:rankconditions}
\mathbf{P} \succeq 0, \quad [\mathbf{P}]_{ii} = I ,\quad  \rank (\mathbf{P}) = m
\end{equation}
\end{lemma}
\medskip \noindent
To see why \eqref{eq:rankconditions} holds, assume we are given $\mathbf{P}$ satisfying \eqref{eq:rankconditions}.
Then $\mathbf{P}$ can be written as
\begin{equation}
\mathbf{P} =
\begin{bmatrix}
\Pi_1 \\ \vdots \\ \Pi_n
\end{bmatrix}
\cdot\begin{bmatrix}
\P_1^T & \cdots & \Pi_n^T
\end{bmatrix}
\end{equation}
where each $\Pi_i$ is an $m \times m$ orthogonal matrix.
We get that consistency is satisfied since $
\tPi_{ij} \tPi_{jk} =  P_{i}P_{j}^T P_{j} P_{k}^T = P_{i} P_{k}^T = \tPi_{ik}
$.  
\end{comment}

\begin{figure}
\label{fig:consistentdata}
\centering
\scalebox{0.9}{
\begin{tikzpicture}[thick,
  every node/.style={draw,circle},
  fsnode/.style={},
  gsnode/.style={fill=myblue},
  ssnode/.style={fill=mygreen},
  dsnode/.style={fill=myred},
  every fit/.style={ellipse,draw,inner sep=-2pt,text width=1.2cm}
]

% the vertices of U
\begin{scope}[start chain=going below,node distance=7mm]
  % \node[fsnode,on chain] (f\i) [label=left: \i] {};
  \node[fsnode,on chain,fill=myblue] (f1) [label={[xshift=-0.0cm, yshift=-0.9cm]$1$}] {};
  \node[fsnode,on chain,fill=myred] (f2)  [label={[xshift=-0.0cm, yshift=-0.9cm]$2$}]{};
  \node[fsnode,on chain,fill=mygreen] (f3)  [label={[xshift=-0.0cm, yshift=-0.9cm]$3$}] {};
\end{scope}

% the vertices of V
\begin{scope}[xshift=2cm,yshift=0.0cm,start chain=going below,node distance=7mm]
  \node[ssnode,on chain,fill=myblue] (s1) [label={[xshift=-0.0cm, yshift=-0.9cm]$1$}] {};
  \node[ssnode,on chain,fill=myred] (s2)  [label={[xshift=-0.0cm, yshift=-0.9cm]$2$}]{};
  \node[ssnode,on chain,fill=mygreen]  (s3)  [label={[xshift=-0.0cm, yshift=-0.9cm]$3$}] {};
\end{scope}

% the vertices of W
\begin{scope}[xshift=4cm,yshift=0.0cm,start chain=going below,node distance=7mm]
  \node[dsnode,on chain,fill=myblue] (d1) [label={[xshift=-0.0cm, yshift=-0.9cm]$1$}] {};
  \node[dsnode,on chain,fill=myred] (d2)  [label={[xshift=-0.0cm, yshift=-0.9cm]$2$}]{};
  \node[dsnode,on chain,fill=mygreen]  (d3)  [label={[xshift=-0.0cm, yshift=-0.9cm]$3$}] {};
\end{scope}

% the set U
\node [black,fit=(f1) (f3),label={[xshift=0.0cm, yshift=0.0cm]$S_1$}]  {};
% the set V
\node [black,fit=(s1) (s3),label={[xshift=0.0cm, yshift=0.0cm]$S_2$}]  {};
% the set W
\node [black,fit=(d1) (d3),label={[xshift=0.0cm, yshift=0.0cm]$S_3$}]  {};

% the edges
\draw (f1) -- (s1);
\draw (f2) -- (s2);
\draw (f3) -- (s3);
\draw (s1) -- (d1);
\draw (s2) -- (d2);
\draw (s3) -- (d3);
\draw (f1) to [out=20,in=160] (d1);
\draw (f2) to [out=20,in=160] (d2);
\draw (f3) to [out=20,in=160] (d3);
\end{tikzpicture}
}
\vspace{0.1cm}
\\
\scalebox{0.9}{
\begin{tikzpicture}[thick,
  every node/.style={draw,circle},
  asnode/.style={fill=myblue},
  bsnode/.style={fill=mygreen},
  csnode/.style={fill=myred},
  every fit/.style={ellipse,draw,inner sep=-2pt,text width=1.2cm}
]

% the vertices of U
\begin{scope}[start chain=going below,node distance=7mm]
  \node[asnode,on chain,fill=myblue] (a1) [label={[xshift=-0.0cm, yshift=-0.9cm]$1$}] {};
  \node[asnode,on chain,fill=myred] (a2)  [label={[xshift=-0.0cm, yshift=-0.9cm]$2$}]{};
  \node[asnode,on chain,fill=mygreen] (a3)  [label={[xshift=-0.0cm, yshift=-0.9cm]$3$}] {};
\end{scope}

% the vertices of V
\begin{scope}[xshift=2cm,yshift=0.0cm,start chain=going below,node distance=7mm]
  \node[bsnode,on chain,fill=myblue] (b1) [label={[xshift=-0.0cm, yshift=-0.9cm]$1$}] {};
  \node[bsnode,on chain,fill=myred] (b2)  [label={[xshift=-0.0cm, yshift=-0.9cm]$2$}]{};
  \node[bsnode,on chain,fill=mygreen] (b3)  [label={[xshift=-0.0cm, yshift=-0.9cm]$3$}] {};
\end{scope}

% the vertices of W
\begin{scope}[xshift=4cm,yshift=0.0cm,start chain=going below,node distance=7mm]
  \node[csnode,on chain,fill=myblue] (c1) [label={[xshift=-0.0cm, yshift=-0.9cm]$1$}] {};
  \node[csnode,on chain,fill=myred] (c2)  [label={[xshift=-0.0cm, yshift=-0.9cm]$2$}]{};
  \node[csnode,on chain,fill=mygreen] (c3)  [label={[xshift=-0.0cm, yshift=-0.9cm]$3$}] {};
\end{scope}

% the set U
%\node [myblue,fit=(f1) (f3),label=above:$U$] {};
\node [black,fit=(a1) (a3)] {};
% the set V
\node [black,fit=(b1) (b3)] {};
% the set W
\node [black,fit=(c1) (c3)] {};
% the edges
\draw (a1) -- (b1);
\draw (a2) -- (b3);
\draw (a3) -- (b2);
\draw (b1) -- (c1);
\draw (b2) -- (c2);
\draw (b3) -- (c3);
\draw (a1) to [out=30,in=150] (c1);
\draw (a2) to [out=20,in=160] (c2);
\draw (a3) to [out=-30,in=210] (c3);
\end{tikzpicture}
}
\caption{Example with $n=3$ sensors $S_1,S_2,S_3$ observing  $m=3$ targets. Top: consistent data association. Bottom: inconsistent data association since $\pi_{12}\circ \pi_{23}(2) =3 $ but $\pi_{31}(2)=2$.}
\end{figure}
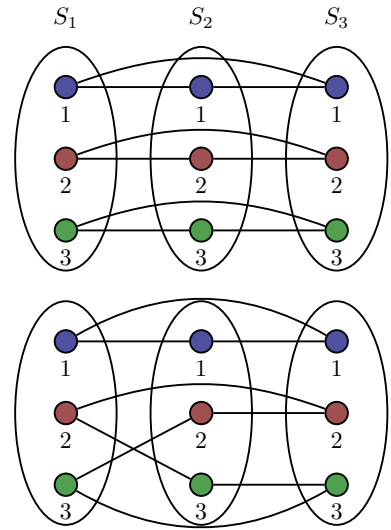

%%%%%%%%%%%%%%%%%%%%%%%%%%%%%%%%%%%%%%%%%%%%%%%%%%%%%%%%%%%%%%%%%%%%%%%%%%%%%%%%

\section{DISTRIBUTED AVERAGING}

\label{sec:consensus}

In the classic consensus algorithms, each agent updates his estimate of a collective quantity by taking convex combinations of the estimates of his neighbors. The problem at hand is  different: we do not want all $\pi_i$'s to converge to the same value, rather to converge to a value satisfying consistency.
The first obstacle one encounters is the combinatorial nature of the problem at hand.
Therefore, we relax the domain of problem and work with doubly stochastic matrices instead of permutation matrices.
This is a natural choice since the convex hull of the set of $m \times m$ permutation matrices is exactly the set of $m \times m$ doubly stochastic matrices.

 \subsection{Algorithm}
 
Based on the previous discussion, we propose the following update rule:
\begin{equation}
\label{eq:update}
\Pii(t+1) =    \dfrac{1}{\card{\cNi}+1} \bigl(  \Pii(t) + \sum_{j \in \cNi} \tPiij \Pij(t)\bigr)
\end{equation}

Let $\mathbf{\Pi} = [\Pi_1^T,  \ldots , \Pi_n^T]^T$. The update rule \eqref{eq:update} can be collectively written as
\begin{equation}
\mathbf{\Pi}(t+1) = F(\cD) \mathbf{\Pi}(t)
\end{equation}
where $\cD$ is the data association graph and $F(\cD)$ is defined in \eqref{eq:definiingF}. The convergence properties of the update rule \eqref{eq:update}  depend solely on $\lim_{k \rightarrow \infty} F(\cD)^k$ and $\mathbf{\Pi} (0)$. These properties will be presented in Section \ref{sec:properties}. 
%\todo{Write something intuitive here to prepare for the results of the next subsection}

After the convergence of Protocol \eqref{eq:update}, we discretize the solution by solving  the assignment problem, formally defined by
\begin{equation}
\begin{aligned}
& \underset{\Pi_i}{\text{maximize}}
& & \metric{\tPi_i}{\Pi_i} \\
& \text{subject to}
& &  \Pi_i \in \symgroup{m}
\end{aligned}
\end{equation}
where $\tPi_i$ is the result of Protocol \eqref{eq:update}. The assignment problem  can be solved using the Hungarian  algorithm \cite{kuhn1955hungarian} in $O(m^3)$ time.

\remark{The update rule \eqref{eq:update} can be seen as a discrete-time system with state $\Pi_i$ and control input $U_i$, \ie
\begin{align}
\Pii(t+1) &=   \Pii(t) +  U_i(t) \\
  U_i(t) &= \dfrac{1}{\card{\cNi}+1} \sum_{j \in \cNi}  \bigl(  \tPiij \Pij(t)-\Pii(t)  \bigr)
\end{align}
}

\remark{
(\textit{Intrinsic ambiguity})
Let $\{\Pi_i\}_{i=1}^n \in (\symgroup{m})^n$ be a set of consistent labels. Then, for any  $\Pi_0 \in \symgroup{m}$, $\{\Pi_i\Pi_0\}_{i=1}^n$ are consistent as well. This intrinsic ambiguity is analogous to the ambiguity of the global reference frame in rotation localization \cite{tron2014distributed}. To remove this ambiguity, one can fix the label of one sensor, say the first sensor, by modifying the sensor graph $\graph$ so that the first sensor has only outgoing edges resulting in $\Pi_1(t)  = \Pi_1(0)  \in \symgroup{m}$ for all $t \geq 0$. 
This approach is necessary in the presence of noise and we will refer to it as Protocol \eqref{eq:update} with a distinguished vertex. 
}

\remark{
(\textit{Equivalence to consensus})
In the noiseless case, we can write $\tPi_{ij} = \Pi_{i0} \Pi_{j0}^{-1}$ for some consistent set of labels $\{\Pi_{i0}\}_{i=1}^n \in (\symgroup{m})^n$ . 
By defining new variables $\Pi_i'\doteq \Pi_{i0}^{-1} \Pi_i$, we obtain
\begin{equation}
\label{eq:consensusequiv}
\Pii'(t+1) =    \dfrac{1}{\card{\cNi}+1} \bigl(  \Pii'(t) + \sum_{j \in \cNi}  \Pij'(t)\bigr)
\end{equation}
which is a component-wise Vicsek model \cite{vicsek1995novel,jadbabaie2003coordination}.
}

\subsection{Properties}\label{sec:properties}

First of all, as the sets of stochastic and doubly stochastic matrices are convex and thus, closed under convex combinations, we naturally have the following lemma.
\begin{lemma}
If all $\tPi_{ij}$ and $\Pi_i(0)$ are stochastic then, $\Pi_i(t)$ is stochastic for all $t \geq 0$. Similarly, if all $\tPi_{ij}$ and $\Pi_i(0)$ are doubly stochastic then, $\Pi_i(t)$ is doubly stochastic for all $t \geq 0$.
\end{lemma}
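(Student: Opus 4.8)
The plan is to argue by induction on $t$, relying on two closure properties of (doubly) stochastic matrices: closure under matrix multiplication and closure under convex combinations. The latter is already recorded in the preliminaries; the former is the one fact I would verify explicitly, since the update rule \eqref{eq:update} involves the products $\tPiij \Pij(t)$ and not merely averages of the iterates.

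First I would establish the multiplicative closure. If $A$ and $B$ are stochastic, then $AB$ is nonnegative as a product of nonnegative matrices, and $AB\mathbf{1} = A(B\mathbf{1}) = A\mathbf{1} = \mathbf{1}$, so $AB$ is stochastic. If in addition both are doubly stochastic, then also $\mathbf{1}^T AB = (\mathbf{1}^T A)B = \mathbf{1}^T B = \mathbf{1}^T$, so $AB$ is doubly stochastic.

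Next I set up the induction. The base case is immediate: $\Pii(0)$ is (doubly) stochastic by hypothesis. For the inductive step, assume every $\Pij(t)$ is (doubly) stochastic. By the hypothesis on the data, each $\tPiij$ is (doubly) stochastic, so by multiplicative closure each product $\tPiij \Pij(t)$ is (doubly) stochastic. The right-hand side of \eqref{eq:update} is then a convex combination of the $\card{\cNi}+1$ (doubly) stochastic matrices $\Pii(t)$ and $\{\tPiij \Pij(t)\}_{j\in\cNi}$, with equal weights $\tfrac{1}{\card{\cNi}+1}$ that sum to one. Closure of the (doubly) stochastic matrices under convex combinations then yields that $\Pii(t+1)$ is (doubly) stochastic, completing the induction.

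There is no serious obstacle here; the only point requiring care is that the preliminaries emphasize only convexity, whereas the products appearing in \eqref{eq:update} force us to also invoke multiplicative closure. Both cases follow from the same argument, differing only in whether one tracks the row-sum identity $A\mathbf{1}=\mathbf{1}$ alone or together with the column-sum identity $\mathbf{1}^T A = \mathbf{1}^T$.
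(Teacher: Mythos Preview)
Your proof is correct and follows the same idea as the paper, which simply remarks before stating the lemma that it ``naturally'' follows from closure of the (doubly) stochastic matrices under convex combinations. Your version is in fact more complete: the paper's one-line justification mentions only convexity, whereas you rightly observe that the products $\tPiij\Pij(t)$ in \eqref{eq:update} also require multiplicative closure, which you verify explicitly before running the induction.
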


Next, we show that in the noiseless case Protocol \eqref{eq:update} with a distinguished vertex converges to a consistent solution under mild coniditions on the sensor graph.  To prove this, we need the following lemma:
\begin{lemma}
\label{lem:perronspecial}
Given a digraph $\graph$ that contains a rooted-out branching tree, we have that the matrix $F(\graph)$ as defined in \eqref{eq:definiingF} satisfies: (a) $\rho(F(\graph)) = 1$, (b)  $1$ is an algebraically simple eigenvalue of $F(\graph)$ with corresponding eigenvector $\mathbf{1}/\sqrt{n}$ and (c) the limit $F(\graph)$ exists and is given by
\begin{equation}
\lim_{k \rightarrow \infty} F(\graph)^k = \mathbf{1} c^T, 
\end{equation}
where $c \geq 0$ and $c^T \mathbf{1} = 1$.
\end{lemma}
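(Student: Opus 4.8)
The plan is to exploit the identity that ties $F(\graph)$ to the Laplacian together with the fact that $F(\graph)$ is a row-stochastic matrix with a strictly positive diagonal. First I would record the elementary algebra. By definition of the in-degree, $A(\graph)\mathbf{1} = \Delta(\graph)\mathbf{1}$, so $(I+A(\graph))\mathbf{1} = (I+\Delta(\graph))\mathbf{1}$ and hence $F(\graph)\mathbf{1} = (I+\Delta(\graph))^{-1}(I+\Delta(\graph))\mathbf{1} = \mathbf{1}$. Moreover $F(\graph) \geq 0$ entrywise and each diagonal entry equals $(1+[A(\graph)]_{ii})/(1+[\Delta(\graph)]_{ii}) \geq 1/(1+[\Delta(\graph)]_{ii}) > 0$. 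Thus $F(\graph)$ is (row-)stochastic with positive diagonal, $\mathbf{1}/\sqrt{n}$ is a right eigenvector for the eigenvalue $1$, and since $\rho(F(\graph)) \leq \norm{F(\graph)}_\infty = 1$ while $1$ is an eigenvalue, we get $\rho(F(\graph)) = 1$. This settles (a) and the eigenvector part of (b).

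Next I would establish that $1$ has \emph{geometric} multiplicity one via the Laplacian. The key computation is
\[
I - F(\graph) = (I+\Delta(\graph))^{-1}\bigl[(I+\Delta(\graph)) - (I+A(\graph))\bigr] = (I+\Delta(\graph))^{-1} L(\graph).
\]
Since $(I+\Delta(\graph))^{-1}$ is a nonsingular diagonal matrix, $\rank(I-F(\graph)) = \rank L(\graph)$. Because $\graph$ contains a rooted out-branching tree, the fact quoted earlier gives $\rank L(\graph) = n-1$, so $\dim \nullsp(I-F(\graph)) = 1$, i.e. the eigenspace of the eigenvalue $1$ is one-dimensional. To upgrade this to \emph{algebraic} simplicity I would use boundedness of the powers: $\norm{F(\graph)^k}_\infty = 1$ for all $k$ because nonnegativity and unit row sums are preserved under multiplication. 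An eigenvalue lying on the spectral-radius circle whose matrix powers stay bounded cannot sit in a nontrivial Jordan block (such a block would force polynomial growth of the entries of $F(\graph)^k$), so the eigenvalue $1$ is semisimple. Combined with geometric multiplicity one, this yields algebraic multiplicity one, completing (b).

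For the limit in (c) I would analyze $F(\graph)$ as the transition matrix of a Markov chain, since $F(\graph)$ need not be irreducible and Perron--Frobenius cannot be applied directly. The directed graph associated with $F(\graph)$ is $\graph$ with all edges reversed, augmented by a self-loop at every vertex (from the positive diagonal). The rooted out-branching hypothesis means some root $r$ reaches every vertex in $\graph$, equivalently every vertex reaches $r$ in the reversed graph, so in the chain every state can reach $r$. From this I would deduce that there is a \emph{unique} closed communicating class $C$: any closed class is reachable-closed, yet must be able to reach $r$, so must contain $r$, forcing all closed classes to coincide. The self-loops make $C$ aperiodic, so the restriction of $F(\graph)$ to $C$ is primitive, with $1$ a simple eigenvalue and all remaining eigenvalues of modulus strictly less than $1$; every state outside $C$ is transient, so the transient-to-transient block is substochastic with spectral radius below $1$. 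Hence $1$ is the only eigenvalue of modulus $1$, all others lie strictly inside the unit disk, and $F(\graph)^k$ therefore converges to the spectral projector onto the eigenvalue $1$. This projector is rank one, equal to $\mathbf{1}c^T$ where $\mathbf{1}$ is the right and $c$ the left Perron vector normalized by $c^T\mathbf{1} = 1$, with $c \geq 0$ supported on $C$.

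The main obstacle is the reducibility of $F(\graph)$: the attractive part of the argument (primitivity and the rank-one limit) only holds on the recurrent class, so the real work is identifying the unique closed aperiodic class from the out-branching hypothesis and separately controlling the transient block to guarantee its spectral radius is strictly below $1$. The no-Jordan-block step for algebraic simplicity is secondary but worth stating carefully, since geometric simplicity alone does not preclude a nontrivial Jordan structure.
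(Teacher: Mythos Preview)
Your argument is correct. Parts (a) and (b) follow the same path as the paper: stochasticity gives $\rho=1$, and the identity $I-F(\graph)=(I+\Delta(\graph))^{-1}L(\graph)$ transfers the rank-$(n-1)$ property of the Laplacian (guaranteed by the rooted out-branching) to $I-F(\graph)$. You are in fact more careful than the paper here: the paper concludes ``$1$ is a simple eigenvalue'' directly from the rank computation, without distinguishing geometric from algebraic simplicity; your bounded-powers observation (a unit-modulus eigenvalue of a power-bounded matrix cannot sit in a nontrivial Jordan block) fills that gap cleanly.

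For (c) the routes diverge. The paper writes the Jordan decomposition $F(\graph)=PJ(\Lambda)P^{-1}$, notes that the first column of $P$ lies in $\mathrm{span}\{\mathbf{1}\}$, and asserts $\lim_{k\to\infty}F(\graph)^k=\mathbf{1}c^T$, then reads off $c\geq 0$ and $c^T\mathbf{1}=1$ from stochasticity of the powers; it does not explicitly argue that every eigenvalue other than $1$ has modulus strictly below $1$, which is what is needed for $J(\Lambda)^k$ to converge. Your Markov-chain analysis---identifying a \emph{unique} closed communicating class from the out-branching hypothesis, aperiodicity from the positive diagonal, primitivity of the restriction, and $\rho<1$ on the substochastic transient block---is exactly what supplies that missing spectral-gap argument and handles the reducibility of $F(\graph)$ that you correctly flag as the main obstacle. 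What the paper's Jordan-form approach buys is brevity once convergence is taken for granted; what yours buys is a self-contained justification that the limit exists at all.
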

A proof of lemma \ref{lem:perronspecial} can be found in Appendix \ref{app:proofoflemmaperronspecial}. Next, we present the theorem of convergence of Protocol \ref{eq:update} to a consistent labeling from arbitrary initialization. A proof of theorem \ref{thm:covergence_noiseless} is presented in Appendix \ref{app:app_covergence_noiseless} and heavily relies on lemma \ref{lem:perronspecial}.
\begin{thm}
\label{thm:covergence_noiseless}
If the sensor graph $\graph$ contains a rooted-out branching and the pairwise associations are noiseless, then the consensus protocol with the distinguished vertex converges to a set of consistent labels up to a global permutation, that is
\begin{equation}
\lim_{t \rightarrow \infty} \Pi_i(t) = \Pi_{i0} \Pi_0
\end{equation}
where $\{\Pi_{i0}\}_{i=1}^n$ is a consistent set of labels (with respect to $\tPi_{ij}$) and $\Pi_0$ is an arbitrary permutation.
Furthermore, if vertex $1$ is the distinguished vertex, then $\Pi_0 = \Pi_{10}^T \Pi_1(0)$.
\end{thm}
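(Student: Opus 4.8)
The plan is to leverage the equivalence-to-consensus reduction together with Lemma \ref{lem:perronspecial}. First I would pass to the transformed variables $\Pi_i' \doteq \Pi_{i0}^{-1}\Pi_i$, where $\{\Pi_{i0}\}_{i=1}^n$ is the consistent label set with $\tPiij = \Pi_{i0}\Pi_{j0}^{-1}$ supplied by the noiseless assumption. As established in the equivalence-to-consensus remark, in these variables the factors $\tPiij$ cancel and the update collapses to the plain averaging recursion $\Pi_i'(t+1) = \tfrac{1}{\card{\cNi}+1}\bigl(\Pi_i'(t) + \sum_{j\in\cNi}\Pi_j'(t)\bigr)$, so that the $i$-th block obeys $\Pi_i'(t) = \sum_{j}[F(\graph')^t]_{ij}\,\Pi_j'(0)$, with $F(\graph')$ the matrix \eqref{eq:definiingF} of the modified sensor graph $\graph'$ (vertex $1$ made a source). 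The whole problem thus reduces to computing $\lim_{t\to\infty}F(\graph')^t$.

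Next I would confirm that $\graph'$ contains a rooted-out branching with root the distinguished vertex $1$: this is what the hypothesis provides once $1$ is taken as a root, and it is exactly what lets the fixed value at $1$ propagate to every sensor. Deleting the incoming edges of vertex $1$ removes no directed path issuing from $1$ (a path visits distinct vertices and hence never re-enters $1$), so reachability from $1$ is preserved and Lemma \ref{lem:perronspecial} applies to $F(\graph')$, giving $\lim_{t\to\infty}F(\graph')^t = \mathbf{1}c^T$ with $c\ge 0$, $c^T\mathbf{1}=1$, and $1$ an algebraically simple eigenvalue. To pin down $c$, I would note that because vertex $1$ has no in-neighbors the first row of $F(\graph')$ equals $e_1^T$, so $e_1^T F(\graph') = e_1^T$; thus $e_1$ spans the (one-dimensional, by simplicity) left eigenspace for the eigenvalue $1$. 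Since $c$ lies in this eigenspace and shares the normalization $c^T\mathbf{1}=1=e_1^T\mathbf{1}$, I conclude $c=e_1$ and $\lim_{t\to\infty}F(\graph')^t=\mathbf{1}e_1^T$.

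Reading off the limit gives $\lim_{t\to\infty}\Pi_i'(t)=\Pi_1'(0)$ for every $i$, i.e. all transformed labels collapse onto the fixed value of the distinguished vertex. Undoing the change of variables through $\Pi_i=\Pi_{i0}\Pi_i'$ yields
\begin{equation}
\lim_{t\to\infty}\Pi_i(t)=\Pi_{i0}\,\Pi_1'(0)=\Pi_{i0}\,\Pi_{10}^{-1}\Pi_1(0)=\Pi_{i0}\,\Pi_0,\qquad \Pi_0\doteq\Pi_{10}^T\Pi_1(0),
\end{equation}
where I used $\Pi_{10}^{-1}=\Pi_{10}^T$; as $\Pi_{10},\Pi_1(0)\in\symgroup{m}$ the product $\Pi_0$ is again a permutation, matching both the general form and the explicit value claimed.

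I expect the crux to be the identification $c=e_1$ rather than the reduction itself, since Lemma \ref{lem:perronspecial} only delivers the form $\mathbf{1}c^T$ and the normalization of $c$; the left-eigenvector observation above is the clean way to localize the limiting mass on vertex $1$. An equivalent probabilistic route is to write $F(\graph')=\bigl[\begin{smallmatrix}1 & 0\\ r & Q\end{smallmatrix}\bigr]$ and argue that the rooted-out branching makes $Q$ transient ($\rho(Q)<1$), so $I-Q$ is invertible and the stationary distribution concentrates on the absorbing vertex $1$. The only other point needing care is the graph-modification step, which the reachability argument above dispatches.
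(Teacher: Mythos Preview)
Your argument is correct and rests on the same reduction as the paper's: both exploit the noiseless factorization $\tPiij=\Pi_{i0}\Pi_{j0}^{T}$ to collapse Protocol~\eqref{eq:update} to ordinary consensus---you via the change of variables $\Pi_i'=\Pi_{i0}^{-1}\Pi_i$ from the equivalence-to-consensus remark, the paper via the equivalent similarity $F(\cD)=\mathbf{\Pi}_0\bigl(F(\graph)\otimes I_m\bigr)\mathbf{\Pi}_0^{T}$---and both then invoke Lemma~\ref{lem:perronspecial} for convergence.

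The one genuine difference is in how the limiting value is pinned down. The paper separates existence from characterization: once $\mathbf{\Pi}(\infty)$ exists it is a fixed point of $F(\cD)$, which unwinds to $(L(\graph)\otimes I_m)\bigl[\Pi_{10}^{T}\Pi_1(\infty);\ldots;\Pi_{n0}^{T}\Pi_n(\infty)\bigr]=0$, and the one-dimensional nullspace of $L(\graph)$ forces $\Pi_i(\infty)=\Pi_{i0}Q$ with $Q$ then read off from the distinguished vertex. You instead compute the limit in one stroke by identifying the left eigenvector: since vertex~$1$ is absorbing in $\graph'$, $e_1^{T}F(\graph')=e_1^{T}$, and simplicity of the eigenvalue~$1$ together with the normalization $c^{T}\mathbf{1}=1$ forces $c=e_1$, hence $F(\graph')^{t}\to\mathbf{1}e_1^{T}$. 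Your route is a bit more direct and makes the role of the distinguished vertex transparent; the paper's route avoids explicitly manipulating the modified graph~$\graph'$ and stays at the level of the Laplacian kernel. Either way the conclusion $\Pi_0=\Pi_{10}^{T}\Pi_1(0)$ drops out identically.
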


However, in the general case, pairwise associations will contain noise. 
Therefore, we should prove convergence without assuming perfect associations.
We generalize lemma \ref{lem:perronspecial} for the   case of interest. Consider a digraph $\cD$ with $n$ vertices and $m < n$ distinguished vertices that have only outgoing edges. Assume that for every non-distinguished  vertex there is a (directed) path from at least one distinguished vertex.  Then, the succeeding lemma provides us with the limit of $F(\cD)$.
\begin{lemma}
\label{lem:perronspecial2}
The matrix $F(\cD)$ as defined in \eqref{eq:definiingF} satisfies: (a) $\rho(F(\cD)) = 1$, (b) $F(\cD)$ has  $1$ as an eigenvalue with algebraic and geometric multiplicities equal to the number of distinguished vertices, (c) it asymptotically converges to a limit of the form
\begin{equation}
\lim_{k \rightarrow \infty} F(\cD)^k = \begin{bmatrix}
I_m & 0 \\
\widetilde{F}_{21} & 0
\end{bmatrix}
\end{equation}
\end{lemma}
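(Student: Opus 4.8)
The plan is to recognize $F(\cD)$ as the transition matrix of an absorbing Markov chain and to read off all three claims from its canonical block form. First I would recall that $F(\cD)\mathbf{1}=\mathbf{1}$, so $F(\cD)$ is row-stochastic and every eigenvalue has modulus at most $1$; since the value $1$ is attained, claim (a) follows at once. Reordering the vertices so that the $m$ distinguished vertices come first, I would observe that a distinguished vertex has no incoming edges, hence $d_{\mathrm{in}}(i)=0$ and the $i$-th row of $A(\cD)$ vanishes; consequently the $i$-th row of $F(\cD)=(I+\Delta(\cD))^{-1}(I+A(\cD))$ equals $e_i^T$. This gives the block-lower-triangular form
\begin{equation}
F(\cD)=\begin{bmatrix} I_m & 0 \\ F_{21} & F_{22}\end{bmatrix},
\end{equation}
in which the distinguished vertices are absorbing and $F_{22}$ is the substochastic block governing the non-distinguished (transient) vertices.

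The crux of the argument is to show $\rho(F_{22})<1$, and this is the step I expect to be the main obstacle. Here I would exploit the reachability hypothesis. An information-flow path $d=u_0\to u_1\to\cdots\to u_\ell=v$ from a distinguished vertex $d$ to a non-distinguished vertex $v$ forces $[F(\cD)]_{u_k,u_{k-1}}>0$ for each $k$, i.e. in the chain governed by $F(\cD)$ one can travel from $v$ back to the absorbing state $d$. Thus every transient vertex reaches an absorbing vertex, so for $k$ at least the number of transient vertices the probability of remaining transient for $k$ consecutive steps is strictly below $1$ from every starting vertex; equivalently every row sum of $F_{22}^k$ is strictly less than $1$, giving $\norm{F_{22}^k}_\infty<1$ and hence $\rho(F_{22})\le\norm{F_{22}^k}_\infty^{1/k}<1$.

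With $\rho(F_{22})<1$ in hand, the remaining steps are routine. Block multiplication yields
\begin{equation}
F(\cD)^k=\begin{bmatrix} I_m & 0 \\ \bigl(\textstyle\sum_{j=0}^{k-1}F_{22}^j\bigr)F_{21} & F_{22}^k\end{bmatrix},
\end{equation}
and letting $k\to\infty$ I would use $F_{22}^k\to 0$ together with the Neumann series $\sum_{j\ge 0}F_{22}^j=(I-F_{22})^{-1}$ to obtain claim (c) with $\widetilde{F}_{21}=(I-F_{22})^{-1}F_{21}$.

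Finally, for claim (b), the block-triangular structure shows that the spectrum of $F(\cD)$ is the union of the spectrum of $I_m$, namely $m$ copies of the eigenvalue $1$, and the spectrum of $F_{22}$, all of modulus strictly less than $1$; hence $1$ has algebraic multiplicity exactly $m$. To match the geometric multiplicity, I would solve $(F(\cD)-I)\,(v_1,v_2)^T=0$, which forces $v_2=(I-F_{22})^{-1}F_{21}v_1$ with $v_1\in\reals^m$ free, so the eigenspace for $1$ has dimension exactly $m$. This completes all three claims, with the verification of $\rho(F_{22})<1$ being the single nontrivial ingredient and everything else following from stochasticity and block algebra.
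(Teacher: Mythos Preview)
Your proposal is correct and follows essentially the same route as the paper: put $F(\cD)$ in block-lower-triangular form with the distinguished vertices first, then use the reachability hypothesis to force every row sum of $F_{22}^k$ strictly below $1$ for large $k$, whence $\rho(F_{22})<1$ and $F_{22}^k\to 0$. Your write-up is in fact more complete than the paper's, which leaves part~(b) implicit and does not spell out the Neumann-series identification $\widetilde{F}_{21}=(I-F_{22})^{-1}F_{21}$; both of these additions are correct and worth keeping.
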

A proof of lemma \ref{lem:perronspecial2} is presented in Appendix \ref{app:proofperronspeical2}. An immediate consequence of lemma \ref{lem:perronspecial2} is the following theorem.
\begin{thm}
\label{thm:covergence_noise}
If the sensor graph $\graph$ contains a rooted-out branching, then the consensus protocol with a distinguished vertex asymptotically converges to a solution that depends only on   $\{ \tPi_{ij}\}_{(i,j)\in \eset}$ and on the initial value of the labels of the distinguished vertex.
\end{thm}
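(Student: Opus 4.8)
The plan is to obtain the statement directly from Lemma \ref{lem:perronspecial2} once the dynamics are written in stacked form and the distinguished-vertex modification is translated from the sensor graph $\graph$ to the data association graph $\cD$. Recall that the collective update is $\mathbf{\Pi}(t+1) = F(\cD)\mathbf{\Pi}(t)$, hence $\mathbf{\Pi}(t) = F(\cD)^t\,\mathbf{\Pi}(0)$, so the entire question reduces to identifying $\lim_{t\to\infty} F(\cD)^t$ and applying it to the initial state. Lemma \ref{lem:perronspecial2} is exactly the tool that supplies this limit, provided its hypotheses hold for $\cD$.

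First I would set up the dictionary between the two graphs and check the hypotheses. Distinguishing the first sensor means giving vertex $1$ only outgoing edges in $\graph$; since the vertices of $\cD$ are the pairs $(i,k) \in \vset \times \{1,\ldots,m\}$, this makes the $m$ vertices $(1,1),\ldots,(1,m)$ into vertices of $\cD$ with only outgoing edges, i.e. $m$ distinguished vertices, matching the hypothesis of Lemma \ref{lem:perronspecial2}. The step that requires real work, and which I expect to be the main obstacle, is verifying the reachability hypothesis of the lemma, namely that every non-distinguished vertex of $\cD$ is reachable by a directed path from some distinguished vertex. The key observation is that each estimated association $\tPi_{ij}$ is the matrix of a permutation $\tpi_{ij} \in \symgroup{m}$, so the edges $\cD$ places between the feature-vertices of two adjacent sensors form a perfect matching. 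Consequently any directed path $i_0 \to i_1 \to \cdots \to i_r$ in the sensor graph lifts to $m$ vertex-disjoint directed paths in $\cD$, obtained by composing the permutations along the path. Since $\graph$ contains a rooted-out branching, I take the distinguished vertex $1$ to be its root, so that vertex $1$ reaches every sensor in $\graph$; the lifting then shows that the distinguished vertices of $\cD$ reach every feature-vertex, which is precisely the condition required. I would also note that deleting the incoming edges of the root to distinguish it leaves its outgoing branching intact, so the reachability is not destroyed by the modification.

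Finally I would invoke Lemma \ref{lem:perronspecial2}(c) and partition the state with the distinguished block first, $\mathbf{\Pi}(0) = [\Pi_1(0)^T,\ \mathbf{\Pi}_{\mathrm{nd}}(0)^T]^T$, where $\Pi_1(0)$ is the fixed label of the distinguished sensor. The lemma gives
\begin{equation}
\lim_{t\to\infty}\mathbf{\Pi}(t) = \begin{bmatrix} I_m & 0 \\ \widetilde{F}_{21} & 0\end{bmatrix}\begin{bmatrix}\Pi_1(0) \\ \mathbf{\Pi}_{\mathrm{nd}}(0)\end{bmatrix} = \begin{bmatrix}\Pi_1(0) \\ \widetilde{F}_{21}\,\Pi_1(0)\end{bmatrix}.
\end{equation}
The zero columns annihilate $\mathbf{\Pi}_{\mathrm{nd}}(0)$, so the top block recovers the fixed label $\Pi_1(0)$ and the remaining labels converge to $\widetilde{F}_{21}\Pi_1(0)$. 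Since $\widetilde{F}_{21}$ is determined entirely by $F(\cD)$, and hence only by the sensor graph together with the measurements $\{\tPi_{ij}\}_{(i,j)\in\eset}$, the limit depends only on $\{\tPi_{ij}\}_{(i,j)\in\eset}$ and on $\Pi_1(0)$, the initial label of the distinguished vertex. This is exactly the claim, and everything beyond the reachability-lifting argument is bookkeeping on top of Lemma \ref{lem:perronspecial2}.
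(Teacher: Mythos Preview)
Your proposal is correct and follows the same route as the paper: the authors explicitly state that Theorem~\ref{thm:covergence_noise} is ``an immediate consequence of'' Lemma~\ref{lem:perronspecial2} and give no further argument. You have simply spelled out that consequence---translating the distinguished-vertex modification from $\graph$ to $m$ distinguished vertices of $\cD$, lifting the rooted-out branching in $\graph$ to paths in $\cD$ via the permutation structure of the $\tPi_{ij}$ to verify the reachability hypothesis, and then reading off the limit from the block form of $\lim_k F(\cD)^k$---so your write-up is a faithful (and more detailed) expansion of what the paper leaves implicit.
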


\remark{
Although Protocol \eqref{eq:update} is not guaranteed to converge to the true solution in the presence of noise, it is verified experimentally (see Section \ref{sec:experiments}) that for moderate levels of noise the true solution is still recovered.
}

%\todo{Write how efficient it is especially when $\tPi_{ij}$ is a permutation}

\section{DISTRIBUTED ORTHOGONAL ITERATION}

\label{sec:spectral}

In this section, we present a decentralized implementation of the spectral method proposed by Pachauri et al. \cite{pachauri2013solving}. 
First, we briefly review the main concept of \cite{pachauri2013solving}. 
Based on the discussion in Section \ref{sec:problemformul}, the multi-sensor data association can be cast as the following optimization problem:
\begin{equation}
 \underset{\pi_1,\pi_2,\ldots,\pi_n \in \symgroup{m} }{ \textrm{minimize}}  \sum_{ (i,j) \in \eset } d ( \widetilde{\pi}_{ij},\pi_i\circ \pi_j^{-1} )
\end{equation}
which is equivalent to
\begin{equation}
\label{eq:optrelax1}
 \underset{\Pi_1,\Pi_2,\ldots,\Pi_n \in \symgroup{m} }{ \textrm{maximize}}  \sum_{ (i,j) \in \eset }  \trace( \Pi_i^T \tPi_{ij} \Pi_j)
\end{equation}
In the above formulation, each $\Pi_i$ is a permutation matrix which makes the problem computationally intractable. However, if the individual constraints  
are relaxed  into a single collective constraint of the form $\mathbf{\Pi}^T\mathbf{\Pi} = m I_m$ as proposed in \cite{pachauri2013solving}, the relaxed problem becomes computationally tractable since it becomes a Rayleigh quotient problem:
\begin{equation}
\label{eq:optrelax2}
\begin{aligned}
& \underset{\mathbf{\Pi}}{\text{maximize}}
& & \trace(\mathbf{\Pi}^T \mathbf{P} \mathbf{\Pi} ) \\
& \text{subject to}
& &  \mathbf{\Pi}^T\mathbf{\Pi} = m I_m
\end{aligned}
\end{equation}
where $[\mathbf{P}]_{ij} = \tPi_{ij}$. It is a well-known fact that the solution of problem (\ref{eq:optrelax2}) is given by the matrix having as columns the $m$ leading eigenvectors of $\mathbf{P}$ scaled by a factor of $\sqrt{m}$. Numerically, the optimal solution in a centralized setting can be computed using  Orthogonal Iteration \cite{golub2012matrix} (see Algorithm \ref{alg:OI}).
Intuitively,  each iteration consists of two main steps: a power step and an orthogonalization step based on $QR$-decomposition.
 For details on the convergence of Orthogonal Iteration we refer the reader to \cite{golub2012matrix} and references therein.

\begin{algorithm}
\caption{Orthogonal Iteration}
\label{alg:OI}
\begin{algorithmic}[1]
\Require Input matrix $\mathbf{P}$, initial $\mathbf{\Pi}(0)$
%\State Initialize $x(0)$.
\For{ $t=0,\ldots,N$}
\State $\mathbf{Y}(t)= \mathbf{P} \mathbf{\Pi}(t)$
\State $\mathbf{Q}(t) \mathbf{R}(t) = \mathbf{Y}(t)$
\State $\mathbf{\Pi}(t+1) =\mathbf{Q}(t)$
\EndFor
\end{algorithmic}
\end{algorithm}

The Orthogonal Iteration method is not readily decentralizable since the $QR$-decomposition step requires information from the entire sensor graph.
Nevertheless,  as observed by Kempe and McSherry \cite{kempe2004decentralized}, if $Y = [Y_1^T, \ldots,Y_n^T]^T$ where each sub-matrix $Y_i$ is available at sensor $i$,  we obtain $\mathbf{R}^T \mathbf{R}= \mathbf{Y}^T\mathbf{Y} = \sum_{i=1}^n Y_i^TY_i$. 
If  the sum $\sum_{i=1}^n Y_i^TY_i$ can be computed in a decentralized manner, then we can recover $\mathbf{R}(t)$ using Cholesky factorization.
Let $Z_i \doteq n Y_i^TY_i$ and $\mean{Z} \doteq \frac{1}{n} \sum_{i=1}^n Z_i = \mathbf{R}^T \mathbf{R}$. 
 Based on this observation, if all sensors know $n$ and solve a consensus problem for computing the average $\mean{Z} = \mathbf{R}^T \mathbf{R}$,  
then the orthogonalization step can be performed locally using  Cholesky decomposition. 
This is summarized in Algorithm \ref{alg:DOI} where $\mean{Z}_i$ denotes the estimate of $\mean{Z}$ by sensor $i$. 
Alternatively, the  sum $\sum_{i=1}^n Y_i^TY_i$ can be computed using the Push-Sum algorithm \cite{kempe2003gossip} as in \cite{kempe2004decentralized}.

%However, instead of the Push-Sum algorithm \cite{kempe2004decentralized} for computing, we use a standard  consensus protocol given in \eqref{eq:standarconsensusprotocol}.

\remark{ 
Kempe and McSherry \cite{kempe2004decentralized} have shown that if the error in the computation of $R_i$ can be made arbitrarily small in finite iterations, then Distributed Orthogonal Iteration converges asymptotically an invariant subspace of the input matrix.
A rigorous analysis on the number of iterations of the inner  consensus is the subject of future work.
}

\begin{algorithm}
\caption{Distributed Orthogonal Iteration}
\label{alg:DOI}
\begin{algorithmic}[1]
\Require Input associations $\{\tPi_{ij} \}_{(i,j)\in \eset}$, initial $\{\Pi_i(0)\}_{i \in \vset}$
%\State Initialize $x(0)$.
\For{ $t=0,\ldots,N$}
\State $Y_i(t)= \sum_{j \in \cNi \cup \{ i\}}\tPi_{ij} \Pi_j(t)$
\State Estimate $\mean{Z}_i(t)$ by consensus.
%\State $\mean{Z}_i(t) = R_i(t)^T R_i(t)$ 
\State $R_i(t) = \text{chol}(\mean{Z}_i(t) )$
\State $\Pi_i(t+1) = Y_i(t) R_i(t)^{-1}$
\EndFor
\end{algorithmic}
\end{algorithm}

\remark{Due to the nonlinear dependence of $Z_i$ on $Y_i$, dynamic consensus algorithms \cite{spanos2005dynamic} cannot be readily incorporated into the current approach.}

The solution obtained by the Orthogonal Iteration is optimal up to an arbitrary $m \times m$ orthogonal transformation. That is, if $\mathbf{\Pi}^\star$ is an optimal solution of   \eqref{eq:optrelax2}, then any matrix of the form $\mathbf{\Pi}^\star Q$  with $Q \in \OO{m}$ is  optimal as well.
Finally, the corrective orthogonal transformation is computed by solving the following  Orthogonal Procrustes \cite{golub2012matrix} problem:
\begin{equation} 
 \underset{Q \in \OO{m} }{ \textrm{minimize}}  \  \norm{I-\Pi_1 Q}_F^2
\end{equation}
It is well known \cite{golub2012matrix} that if $\Pi_1 = USV^T$ is an SVD of $\Pi_1$, then $Q^\star = V U^T$. As a final step, we apply the corrective transformation $Q$ to all remaining $\Pi_i$, $i=2,\ldots,n$ and then find a discrete solution by the Hungarian algorithm.

\section{EXPERIMENTS}

\label{sec:experiments}

\subsection{Synthetic data}

\label{subsec:experiments_synth}

First, we experiment with synthetic data to validate the accuracy of proposed methods across different problem settings. We generate instances with $m=50$ targets and $n\in \{20,50,100\}$ sensors. As a performance criterion we use the accuracy of the obtained labels,
which is simply defined as the percentage of correct labels. 
We vary the percentage of erroneous associations in each pairwise association from  10\% to 90\%  to test the robustness of each method against the presence of outliers.
We first experiment  with a fully connected graph, see Fig. 3 (a)-(c),  and then we remove half of the edges, see Fig. 3 (d)-(f).
In the case of a fully connected graph, the consensus protocol achieves exact recovery of the labels for up to approximately 40\% of outliers. 
The spectral method clearly outperforms the consensus protocol and is able to achieve exact recovery for up to 80\%-90\% of outliers.
By reducing either the number of edges in the sensor graph or  the number of sensors, the accuracy of both methods slightly drops.
As expected,  the benefit of using joint matching in accuracy increases with the size of the network and the number of  edges.

\def \figscale{0.55}

\begin{figure*}
\vspace{.2cm}
\label{fig:synthetic}
\begin{center}
\begin{tabular}{c c c}

\scalebox{\figscale}{
\begin{tikzpicture}
	\begin{axis}[
		xlabel={Fraction of outliers},
		ylabel={accuracy},
		xmax=1.15,
	]
	% use TeX as calculator:
	\addplot+[red,solid,mark=*,every mark/.append style={solid, fill=red}] table[x=p,y=mean_cs] {full_K50_N20.txt}; \addlegendentry{\small CS}
	\addplot+[blue,solid,mark=square,every mark/.append style={solid, fill=blue}] table[x=p,y=mean_sp] {full_K50_N20.txt}; \addlegendentry{\small SP}
	%\addplot+[red,dashed,mark=*,every mark/.append style={solid, fill=red}] table[x=p,y=mean_cs] {data/half_K50_N20.txt}; \addlegendentry{\small hConsensus}
	%\addplot+[black,solid,mark=triangle, thick] table[x=p,y=mean_in] {data/synth_full_K20_n20.txt};  \addlegendentry{\small Initial}
	\end{axis}
\end{tikzpicture}
}
&
%%%%%%%%%%
\scalebox{\figscale}{
\begin{tikzpicture}
	\begin{axis}[
		xlabel={Fraction of outliers},
		ylabel={accuracy},
		xmax=1.15,
	]
	% use TeX as calculator:
	\addplot+[red,solid,mark=*,every mark/.append style={solid, fill=red}] table[x=p,y=mean_cs] {full_K50_N50.txt}; \addlegendentry{\small CS}
	\addplot+[blue,solid,mark=square,every mark/.append style={solid, fill=blue}] table[x=p,y=mean_sp] {full_K50_N50.txt}; \addlegendentry{\small SP}
	%\addplot+[black,solid,mark=triangle, thick] table[x=p,y=mean_in] {data/synth_full_K20_n20.txt};  \addlegendentry{\small Initial}
	\end{axis}
\end{tikzpicture}
}
%%%%%%%%%%
&
\scalebox{\figscale}{
\begin{tikzpicture}
	\begin{axis}[
		xlabel={Fraction of outliers},
		ylabel={accuracy},
		xmax=1.15,
	]
	% use TeX as calculator:
	\addplot+[red,solid,mark=*,every mark/.append style={solid, fill=red}] table[x=p,y=mean_cs] {full_K50_N100.txt}; \addlegendentry{\small CS}
	\addplot+[blue,solid,mark=square,every mark/.append style={solid, fill=blue}] table[x=p,y=mean_sp] {full_K50_N100.txt}; \addlegendentry{\small SP}
	%\addplot+[black,solid,mark=triangle, thick] table[x=p,y=mean_in] {data/synth_full_K20_n20.txt};  \addlegendentry{\small Initial}
	\end{axis}
\end{tikzpicture}
}
%%%%%%%%%%%%%%%%%%%%%%%%%
\\
(a) $n=20$ & (b) $n=50$ & (c) $n=100$
\vspace{0.2cm}
\\
\scalebox{\figscale}{
\begin{tikzpicture}
	\begin{axis}[
		xlabel={Fraction of outliers},
		ylabel={accuracy},
		xmax=1.15,
	]
	% use TeX as calculator:
	\addplot+[red,solid,mark=*,every mark/.append style={solid, fill=red}] table[x=p,y=mean_cs] {half_K50_N20.txt}; \addlegendentry{\small CS}
	\addplot+[blue,solid,mark=square,every mark/.append style={solid, fill=blue}] table[x=p,y=mean_sp] {half_K50_N20.txt}; \addlegendentry{\small SP}
		
	%\addplot+[black,solid,mark=triangle, thick] table[x=p,y=mean_in] {data/synth_full_K20_n20.txt};  \addlegendentry{\small Initial}
	\end{axis}
\end{tikzpicture}
}
&
%%%%%%%%%%
\scalebox{\figscale}{
\begin{tikzpicture}
	\begin{axis}[
		xlabel={Fraction of outliers},
		ylabel={accuracy},
		xmax=1.15,
	]
	% use TeX as calculator:
	\addplot+[red,solid,mark=*,every mark/.append style={solid, fill=red}] table[x=p,y=mean_cs] {half_K50_N50.txt}; \addlegendentry{\small CS}
	\addplot+[blue,solid,mark=square,every mark/.append style={solid, fill=blue}] table[x=p,y=mean_sp] {half_K50_N50.txt}; \addlegendentry{\small SP}
	%\addplot+[black,solid,mark=triangle, thick] table[x=p,y=mean_in] {data/synth_full_K20_n20.txt};  \addlegendentry{\small Initial}
	\end{axis}
\end{tikzpicture}
}
%%%%%%%%%%
&
\scalebox{\figscale}{
\begin{tikzpicture}
	\begin{axis}[
		xlabel={Fraction of outliers},
		ylabel={accuracy},
		xmax=1.15,
	]
	% use TeX as calculator:
	\addplot+[red,solid,mark=*,every mark/.append style={solid, fill=red}] table[x=p,y=mean_cs] {half_K50_N100.txt}; \addlegendentry{\small CS}
	\addplot+[blue,solid,mark=square,every mark/.append style={solid, fill=blue}] table[x=p,y=mean_sp] {half_K50_N100.txt}; \addlegendentry{\small SP}
	%\addplot+[black,solid,mark=triangle, thick] table[x=p,y=mean_in] {data/synth_full_K20_n20.txt};  \addlegendentry{\small Initial}
	\end{axis}
\end{tikzpicture}
}
%%%%%%%%%%
\\
(d) $n=20$ &  (e) $n=50$ & (f) $n=100$
\end{tabular}
\caption{ Accuracy of consensus protocol (CS) and spectral method (SP) versus percentage of outliers in pairwise data associations.}
\end{center}
\end{figure*}
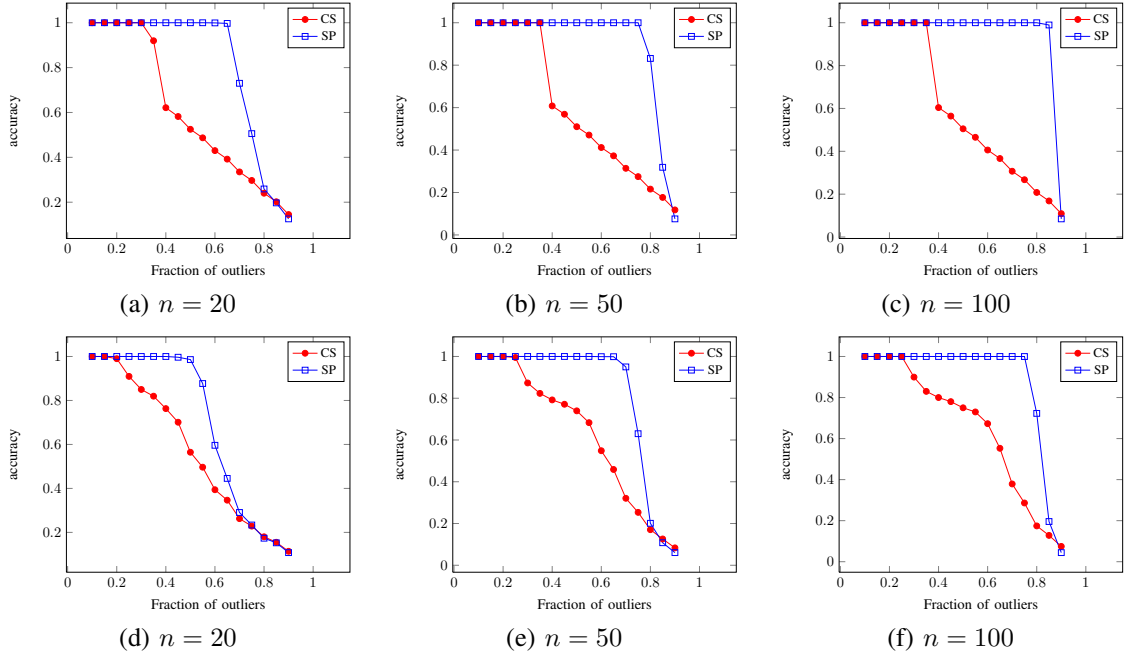

\subsection{Real data}

First, we experiment with the CMU  hotel sequence \footnote{ \url{http://vasc.ri.cmu.edu//idb/html/motion/hotel/}} which consists of 111 frames and the CMU house \footnote{\url{http://vasc.ri.cmu.edu//idb/html/motion/house}} sequence which consists of 101 frames.
We detect corner features \cite{harris1988combined} in the first image that remain visible across the entire sequence.
The ground-truth associations are obtained by tracking the features across the entire sequence.
Specifically, we used 89 points in the hotel sequence and 55 in the house sequence. 
Pairwise matches are computed between an image and at most 10 previous images and 10 subsequent images in the sequence, thus the sensor graph is relatively sparse. 
Pairwise matches are obtained by extracting SIFT descriptors \cite{lowe2004distinctive} and then solving the assignment problem using the Hungarian algorithm. We used the SIFT implementation of \cite{vedaldi2010vlfeat}.
The baseline method consists of obtaining labels by directly matching with the first image in the sequence.
The ground-truth associations are extracted by tracking the features over the sequence.
The results are presented in Table \ref{tab:cmudatasets}.
Both methods significantly outperform the baseline which shows that joint matching can significantly improve the overall matching performance.

\begin{table}[ht!]
\caption{CMU datasets Results} % title of Table
\label{tab:cmudatasets}
\centering % used for centering table
\begin{tabular}{|c|c|c|c|} % centered columns (4 columns)
%\hline
\hline 
Dataset & Baseline & Spectral & Consensus\\ [0.2ex] 
\hline  \hline
House &  0.716  &  0.861 &0.887  \\
Hotel&  0.666  &  0.872 & 0.878 \\
[0.2ex] % [1ex] adds vertical space
\hline %inserts single line
\end{tabular}
\end{table}

Next, we experiment with the Affine Covariant Regions Datasets \footnote{\url{http://www.robots.ox.ac.uk/~vgg/data/data-aff.html}} which consist of sequences of 6 images with significant overlap but with viewpoint variability.
We detect approximately 300 SIFT  features \cite{lowe2004distinctive} in each sequence that are visible across the entire sequence.
The ground-truth associations are extracted using the available ground-truth homographies. The results are presented in Table \ref{tab:graffitidataset}.
Both methods significantly outperform the baseline in the two challenging sequences Graffiti and Wall. 
However, in the remaining sequences the input pairwise associations are already very accurate resulting in only minor improvements if any.
It is worth noting that the spectral method and the consensus method have similar performance in all real sequences despite the fact that the spectral method
is computationally more expensive.
Therefore, in situations where the computational resources are limited, 
for instance swarms of UAVs with on-board cameras, the consensus algorithm would be preferable.  

\begin{table}[ht!]
\caption{Graffiti datasets Results} % title of Table
\label{tab:graffitidataset}
\centering % used for centering table
\begin{tabular}{|c|c|c|c|} % centered columns (4 columns)
\hline %inserts double horizontal lines
Dataset & Baseline & Spectral & Consensus\\ [0.5ex] % inserts table
%heading
\hline \hline 
Graffiti & 0.544   & 0.639 & 0.647  \\ 
Wall     & 0.570   & 0.646 & 0.658  \\
Trees    & 0.912   & 0.902 & 0.909  \\
Leuven   & 0.962   & 0.972 & 0.979 \\
Bikes    & 0.986   & 0.987 & 0.987  \\
UBC      & 0.996   & 0.996 & 0.996  \\
[0.2ex] % [1ex] adds vertical space
\hline %inserts single line
\end{tabular}
\label{table:nonlin} % is used to refer this table in the text
\end{table}

 \begin{figure*}
 \label{fig:realexamples}
 \vspace{.2cm}
 \centering
 \begin{tabular}{ccc}

\includegraphics[trim={6cm 13cm 6cm 13cm},clip,scale=1,width=5.2cm,height=3cm]{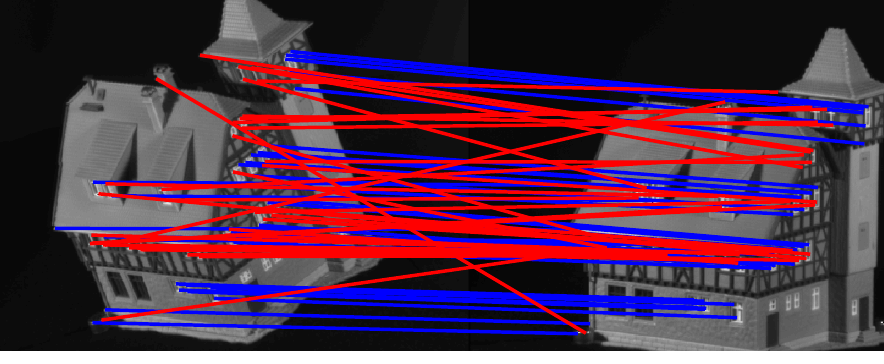}
%  trim={<left> <lower> <right> <upper>}
&
\includegraphics[trim={6cm 12.2cm 6cm 12.2cm},clip,scale=1,width=5.2cm,height=3cm]{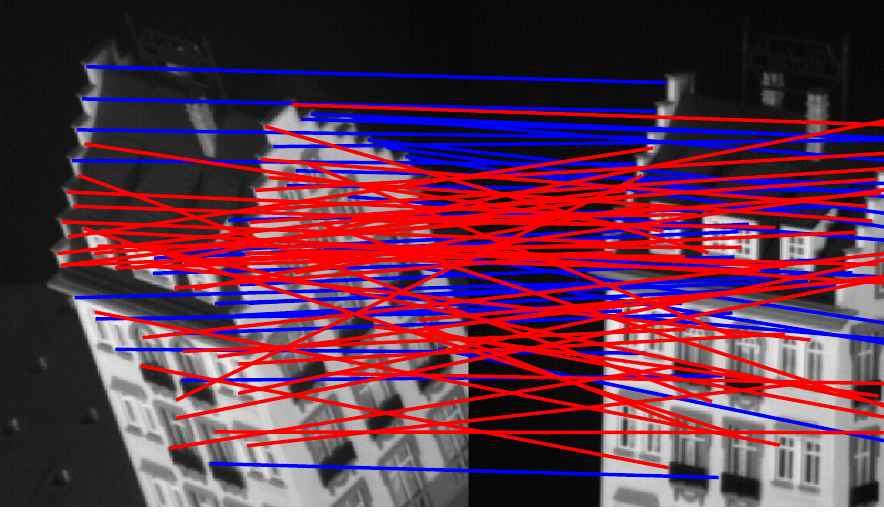}
&
\includegraphics[trim={6cm 12.6cm 6cm 12.6cm},clip,scale=1,width=5.2cm,height=3cm]{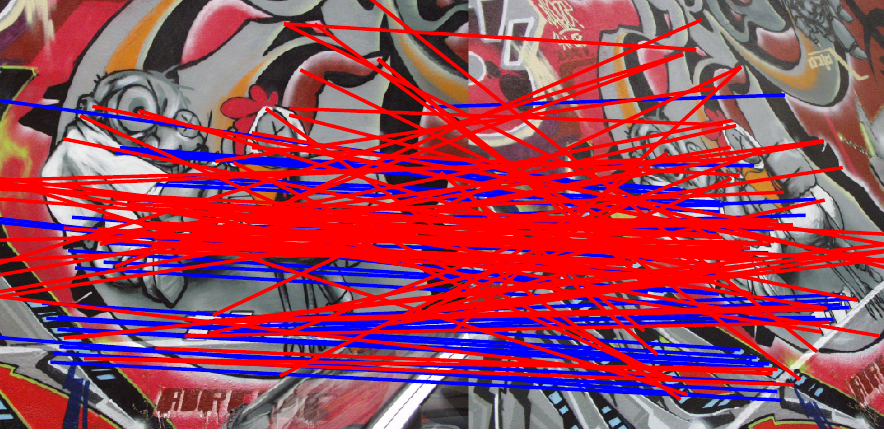}
\\
\includegraphics[trim={6cm 13cm 6cm 13cm},clip,scale=1,width=5.2cm,height=3cm]{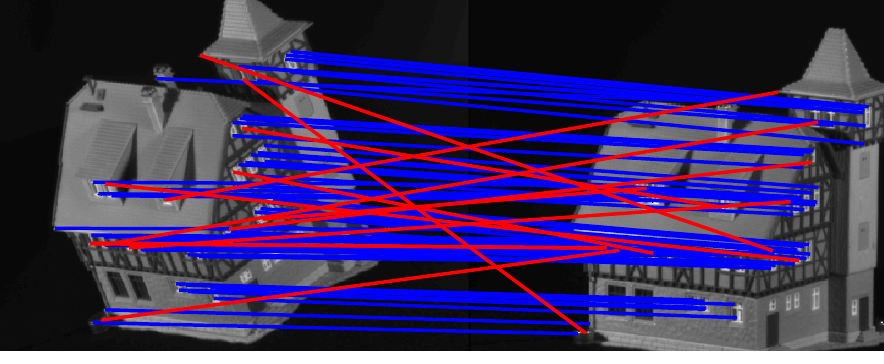}
&
\includegraphics[trim={6cm 12.2cm 6cm 12.2cm},clip,scale=1,width=5.2cm,height=3cm]{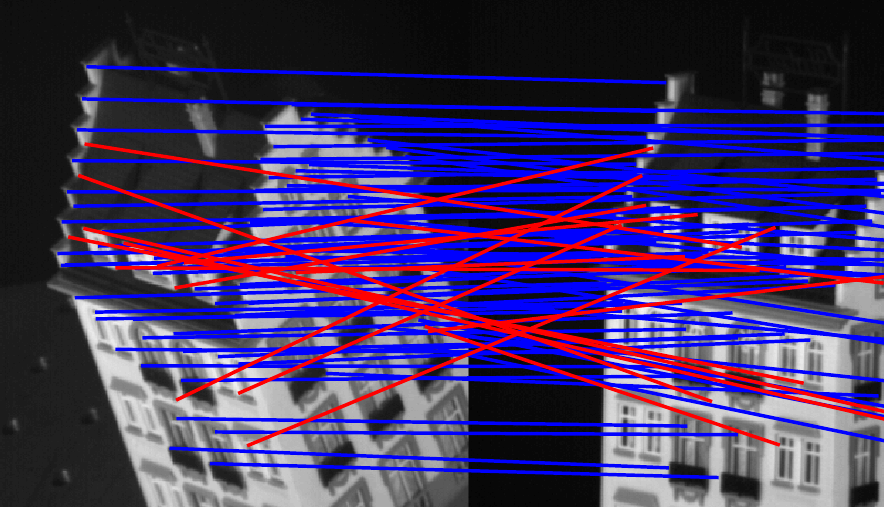}
&
\includegraphics[trim={6cm 12.6cm 6cm 12.6cm},clip,scale=1,width=5.2cm,height=3cm]{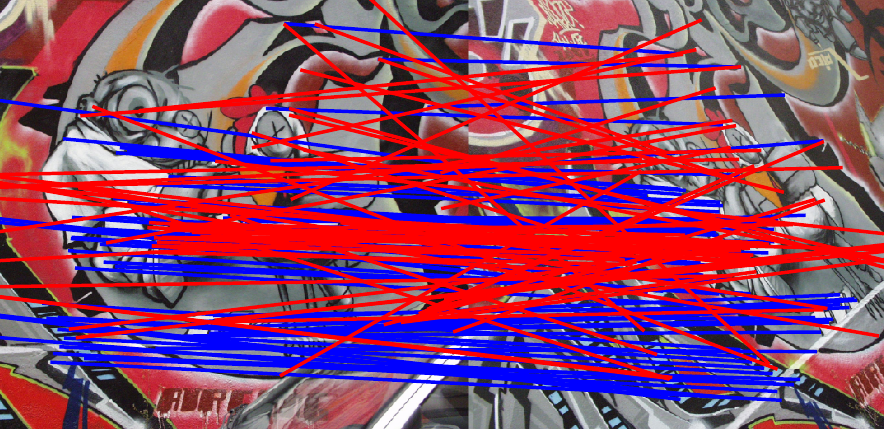}
 \end{tabular}
 \caption{Top: initial matches. Blue color corresponds to inlier matches whereas red color corresponds to outlier matches. Bottom: matches after the consensus algorithm.}
\end{figure*}

%%%%%%%%%%%%%%%%%%%%%%%%%%%%%%%%%%%%%%%%%%%%%%%%%%%%%%%%%%%%%%%%%%%%%%%%%%%%%%%%

\section{CONCLUSIONS AND FUTURE WORK}

In this paper, we proposed two fully decentralized methods for the problem of data association in sensor networks.
The first was a computationally inexpensive consensus-like algorithm and the second a decentralized spectral method.
We presented experimental results in the context of camera sensor networks. 
We believe that the results of this work will find applications in other settings as well.
In the future, we plan on generalizing these algorithms in order to handle different number of targets in each sensor and
time-varying pairwise associations.

%%%%%%%%%%%%%%%%%%%%%%%%%%%%%%%%%%%%%%%%%%%%%%%%%%%%%%%%%%%%%%%%%%%%%%%%%%%%%%%%
%\addtolength{\textheight}{-12cm}
%%%%%%%%%%%%%%%%%%%%%%%%%%%%%%%%%%%%%%%%%%%%%%%%%%%%%%%%%%%%%%%%%%%%%%%%%%%%%%%%

\section{APPENDIX}

%Appendixes should appear before the acknowledgment.

\subsection{Proof of lemma \ref{lem:perronspecial}}
\label{app:proofoflemmaperronspecial}

First, we need the following lemma:
 \begin{lemma}
\label{lem:detzero}
Let $L_i(\graph)$ be the matrix obtained from the Laplacian $L(\graph)$ by removing the $i$th row and $i$th column. Define $F_i(\graph)$ analogously. Then,
\begin{equation}
\det (L_i(\graph)) \neq 0 \quad \textrm{iff} \quad  \det (I -F_i(\graph)) \neq 0
\end{equation}
\end{lemma}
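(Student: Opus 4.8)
The plan is to exploit the simple algebraic relationship between $F(\graph)$ and $L(\graph)$ that follows directly from the definition \eqref{eq:definiingF}. Writing $D \doteq I + \Delta(\graph)$, which is a diagonal matrix with diagonal entries $1 + [\Delta(\graph)]_{jj} \geq 1 > 0$ and hence invertible, we have $F(\graph) = D^{-1}(I + A(\graph))$, and therefore
\begin{equation}
I - F(\graph) = D^{-1}\bigl[(I+\Delta(\graph)) - (I + A(\graph))\bigr] = D^{-1}\bigl(\Delta(\graph) - A(\graph)\bigr) = D^{-1} L(\graph).
\end{equation}
Thus $I - F(\graph)$ is nothing but the Laplacian left-scaled by the positive diagonal matrix $D^{-1}$.

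The key observation is that, because $D^{-1}$ is diagonal, deleting the $i$th row and column commutes with left-multiplication by $D^{-1}$. Indeed, entrywise $[D^{-1} L(\graph)]_{jk} = [D^{-1}]_{jj}\,[L(\graph)]_{jk}$, so restricting to the indices $j,k \neq i$ yields $(I - F(\graph))_i = (D_i)^{-1} L_i(\graph)$, where $D_i$ denotes $D$ with its $i$th row and column removed. Since $I$ is diagonal, deleting its $i$th row and column leaves an identity matrix of size $n-1$, so $(I - F(\graph))_i = I - F_i(\graph)$. Combining these gives
\begin{equation}
I - F_i(\graph) = (D_i)^{-1} L_i(\graph).
\end{equation}

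Taking determinants, $\det(I - F_i(\graph)) = \det(D_i)^{-1}\,\det(L_i(\graph))$, and since $D_i$ is diagonal with strictly positive entries we have $\det(D_i)^{-1} > 0$. Hence $\det(I - F_i(\graph))$ and $\det(L_i(\graph))$ are nonzero simultaneously, which is exactly the claimed equivalence.

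The only step requiring any care is justifying that the principal-submatrix operation commutes with left-multiplication by the diagonal factor $D^{-1}$; I would verify this at the level of matrix entries as indicated above, emphasizing that the argument hinges on $D^{-1}$ being diagonal and would fail for a general left factor. Everything else reduces to a one-line determinant computation, so I do not anticipate a genuine obstacle.
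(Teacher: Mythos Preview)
Your proof is correct and follows essentially the same approach as the paper: the paper writes the relation as $L(\graph) = (I+\Delta(\graph))(I - F(\graph))$ and then passes to the principal submatrices using that $I+\Delta(\graph)$ is diagonal with positive entries, which is exactly your argument up to inverting the diagonal factor. No substantive difference.
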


\medskip
\noindent
\begin{proof}
By construction $L(\graph) = (I+\Delta(\graph)) ( I- F(\graph))$ and thus, $L_i(\graph) = (I+\Delta_i(\graph)) ( I- F_i(\graph))$  since $\Delta(\graph)$ is diagonal.
Therefore, since   $\det (I+\Delta_i(\graph)) >0$, we have that $\det(L_i(\graph)) \neq 0$ if and  only if $\det (I-F_i(\graph)) \neq 0$.
\end{proof}

Now, we are ready to proceed to the proof of lemma \ref{lem:perronspecial}
By Gers\v gorin's discs theorem, we know that the eigenvalues of $F(\graph)$ lie in the unit circle and since it is stochastic, that is $F\mathbf{1}=\mathbf{1}$, it can be immediately concluded that $\rho(F(\graph)) = 1$.

From  Proposition 3.8 page 51 of \cite{mesbahi2010graph}, we have that if the (directed) graph $\graph$ contains a rooted-out branching as a subgraph, then $\rank L(\graph) = n-1$. Combining this result with \ref{lem:detzero}, we conclude that the existence of a rooted-out branching implies that $\rank (I-F(\graph)) = n-1$. We conclude that $1$ is a simple eigenvalue of $F(\graph)$.

Let $\lambda_1,\lambda_2,\ldots,\lambda_n$ be an enumeration of the eigenvalues of $F(\graph)$  such that $\lambda_1=1$. Moreover, let $F(\graph) = P J(\Lambda) P^{-1}$ be the Jordan decomposition of $F(\graph)$. From,  $F(\graph) P = P J(\Lambda)$, it is easy to see that the the first column of $P$ is in the span of $ \mathbf{1}$ and thus, $\lim_{k \rightarrow \infty} F(\graph)^k = \mathbf{1} c^T $. Moreover, since $F(\graph)^k$ is stochastic for all positive integers $k$,  it follows that $c \geq 0$ and $c^T \mathbf{1} = 1$.

%\subsection{A useful lemma}

%\begin{lemma}
%\label{lem:rhoconvergence}
%Assume that  $\rho(A^k) < 1$ for some integer $k >0$. Then, $\lim_{n\rightarrow \infty} A^n = 0$.
%\end{lemma}

%\label{app:proofofrhoconvergence}
%For $k=1$, it is well known that $\rho(A)<1$ implies $\lim_{n\rightarrow \infty} A^n = 0$. Now, assume it holds for some $k >1$. Since $\rho(A^k)<1$, it follows that the subsequence $A^{km}$ for $m=1,2,\ldots$ converges to $0$. Let $\norm{\cdot}$ denote any matrix norm and let $\mu = \norm{A}$. For any integer $l$ $ 0 \leq l  \leq k$, we have
%\[\norm{A^{km+l}} \leq  \norm{A^{km}} \mu^l \leq \norm{A^{km}} \mu^k\]
%For any $\epsilon >0$, pick $m_0$ large enough such that $\norm{A^{km_0}} < \epsilon/\mu^k $. Such a $m_0$ is guaranteed to exist by the convergence of the subsequence $A^k,A^{2k},\ldots $. Then,  $\norm{A^{km_0+l}}  \leq  \epsilon$
%which shows that all subsequences of the form $A^{km+l}$ converge and therefore, $\lim_{n\rightarrow \infty} A^n$ converges.
%\QEDA

 \subsection{Proof of theorem \ref{thm:covergence_noiseless}}
\label{app:app_covergence_noiseless}

%We will use the fact that for any square matrix $B$ and any $k \geq 1$, we have $(B\otimes I)^k = B^k \otimes I$ .

In the noiseless case, observe that
\begin{equation}
F(\cD) = \mathbf{\Pi}_0 \left( F(\graph) \otimes I_m \right) \mathbf{\Pi}_0^T,
\end{equation}
where  $ \mathbf{\Pi}_0 = \diag (\Pi_{10},\ldots, \Pi_{n0})$, each $\Pi_{i0} \in \symgroup{m}$  satisfies $\Pi_{i0} \Pi_{j0}^T=\tPi_{ij}$.
\begin{equation}
\lim_{k \rightarrow \infty} F(\cD)^k = \mathbf{\Pi}_0  \left(  F(\graph)^\infty \otimes I_m \right)\mathbf{\Pi}_0 ^T
\end{equation}
since $\lim_{k\rightarrow \infty} F(\graph)^k $ exists as we proved earlier.
It follows that $F(\cD)^k$ converges to a limit as well.
Any asymptotic solution of the consensus protocol satisfies $ \mathbf{\Pi}(\infty) = F(\cD) \mathbf{\Pi}(\infty)$.
  Then, for all $i \in \{1,\ldots,n\}$ we have $\Pi_i(\infty) = ({1}/{\card{\cNi}}) \sum_{j \in \cNi} \tPiij \Pij(\infty)$.  Using $\tPiij =  \Pi_{i0} \Pi_{j0}^T$, we get
  \[
( L(\graph) \otimes I_m )
\begin{bmatrix}
\Pi_{10}^T \Pi_1(\infty) \\ \vdots \\ \Pi_{n0}^T \Pi_n(\infty)
\end{bmatrix}
=0
\]
Under the assumption that $\graph$ is contains a rooted-out branching and thus, $\textrm{null} ( L(\graph)) = \textrm{span} (\mathbf{1})$,  we obtain $\Pi_i(\infty) = \Pi_{i0} Q$ where $Q$ is an arbitrary matrix $Q$. Moreover, if one enforces $\Pi_1(t) = \Pi_1(0)$, $\Pi_1(0) \in \symgroup{m}$, for all $t \geq 0$, it follows that $Q=\Pi_{10}^T \Pi_1(0$ and $\Pi_i(\infty) =  \Pi_{i0} \Pi_{10}^T \Pi_1(0)$, that is equal to any valid labels up to a global permutation.
\QEDA

\subsection{Proof of lemma \ref{lem:perronspecial2}}

\label{app:proofperronspeical2}

Part (a) follows from the fact that $F(\graph)$ is stochastic.
Without loss of generality, assume that vertices $\{1,2,\ldots,m\}$ are the distinguished ones. Then, $F(\graph)$ takes the form
\begin{equation}
F(\graph) =
\begin{bmatrix}
I_m & 0 \\
F_{21} & F_{22}
\end{bmatrix}
\end{equation}
For all $k=1,2,\ldots$ we have by a simple induction:
\begin{equation}
F(\graph)^k =
\begin{bmatrix}
I_m & 0 \\
F_{21,k} & F_{22}^k
\end{bmatrix}
\end{equation}
where $F_{21,1} = F_{21}$, $F_{21,k+1} = F_{21,k}+F_{22}^k F_{21}$

By the assumption that every non-distinguished vertex is path-connected to some distinguished vertex, it follows that for $k>0$ large enough, each row of $F_{21,k}$ contains at least one positive entry. Thus, all eigenvalues of $F_{22}^k$ lie strictly inside the unit circle and thus, $\lim_{k \rightarrow \infty} F_{22}^k = 0$.
\QEDA
%%%%%%%%%%%%%%%%%%%%%%%%%%%%%%%%%%%%%%%%%%%%%%%%%%%%%%%%%%%%%%%%%%%%%%%%%%%%%%%%

%\section*{ACKNOWLEDGMENT}

%\textcolor{red}{The preferred spelling of the word ÒacknowledgmentÓ in America is without an ÒeÓ after the ÒgÓ. Avoid the stilted expression, ÒOne of us (R. B. G.) thanks . . .Ó  Instead, try ÒR. B. G. thanksÓ. Put sponsor acknowledgments in the unnumbered footnote on the first page.}

%%%%%%%%%%%%%%%%%%%%%%%%%%%%%%%%%%%%%%%%%%%%%%%%%%%%%%%%%%%%%%%%%%%%%%%%%%%%%%%%

\bibliographystyle{plain}
\bibliography{egbib}

\end{document}